\DeclareSIUnit{\belmilliwatt}{Bm}
\DeclareSIUnit{\dBm}{\deci\belmilliwatt}
\def\BibTeX{{\rm B\kern-.05em{\sc i\kern-.025em b}\kern-.08em
		T\kern-.1667em\lower.7ex\hbox{E}\kern-.125emX}}
\newif\iftag@here
\newcommand*{\taghere}[1][0pt]
{\ifmeasuring@\else
	\global\tag@heretrue
	\tikz[remember picture,overlay]{\coordinate (taghere) at (0pt,#1);}%
	\fi}
\def\place@tag{%
	\iftagsleft@
	\kern-\tagshift@
	\iftag@here
	\global\tag@herefalse
	\tikz[remember picture,overlay]%
	{\path (taghere) -| node[anchor=base]{\rlap{\boxz@}} (0pt,0pt);}%
	\else
	\if1\shift@tag\row@\relax
	\rlap{\vbox{%
			\normalbaselines
			\boxz@
			\vbox to\lineht@{}%
			\raise@tag
	}}%
	\else
	\rlap{\boxz@}%
	\fi
	\kern\displaywidth@
	\fi
	\else
	\kern-\tagshift@
	\iftag@here
	\global\tag@herefalse
	\tikz[remember picture,overlay]%
	{\path  (taghere) -|  node[anchor=base]{\llap{\boxz@}} (0pt,0pt);}%
	\else
	\if1\shift@tag\row@\relax
	\llap{\vtop{%
			\raise@tag
			\normalbaselines
			\setbox\@ne\null
			\dp\@ne\lineht@
			\box\@ne
			\boxz@
	}}%
	\else \llap{\boxz@}%
	\fi
	\fi
	\fi
}
\DeclareMathOperator*{\argmax}{arg\,max}
\DeclareMathOperator*{\maximize}{maximize}
\DeclareMathOperator*{\argmin}{arg\,min}
\DeclareMathOperator*{\subjectto}{subject\,to}
\newacronym{swipt}{SWIPT}{simultaneous wireless information and power transfer}
\newacronym{wpt}{WPT}{wireless power transfer}
\newacronym{wit}{WIT}{wireless information transfer}
\newacronym{awgn}{AWGN}{additive white Gaussian noise}
\newacronym{tx}{TX}{transmitter}
\newacronym{ir}{IR}{information receiver}
\newacronym{eh}{EH}{energy harvester}
\newacronym{ap}{AP}{average power}
\newacronym{pp}{PP}{peak power}
\newacronym{siso}{SISO}{single-input single-output}
\newacronym{mimo}{MIMO}{multiple-input multiple-output}
\newacronym{miso}{MISO}{multiple-input single-output}
\newacronym{simo}{SIMO}{single-input multiple-output}
\newacronym{rf}{RF}{radio frequency}
\newacronym{dc}{DC}{direct current}
\newacronym{ac}{AC}{alternative current}
\newacronym{papr}{PAPR}{peak-to-average power ratio}
\newacronym{lp}{LPF}{low-pass filter}
\newacronym{mc}{MC}{matching circuit}
\newacronym{mrt}{MRT}{maximum ratio transmission}
\newacronym{rv}{RV}{random variable}
\newacronym{iid}{i.i.d.}{independent and identically distributed}
\newacronym{pdf}{pdf}{probability density function}
\newacronym{dnn}{DNN}{dense neural networks}
\newacronym{mdp}{MDP}{Markov decision process}
\newacronym{sca}{SCA}{successive convex approximation}
\newacronym{sdr}{SDR}{semi-definite relaxation}
\newacronym{spr}{LP}{low power}
\newacronym{mpr}{MP}{medium power}
\newacronym{lpr}{HP}{high power}
\DeclareMathOperator{\rank}{rank}
\newcommand{\norm}[1]{\left\lVert#1\right\rVert_2}
\newcommand{\Tr}[1]{\text{Tr}\{#1\} }
\begin{document}

	\newtheorem{proposition}{Proposition}	
	\newtheorem{lemma}{Lemma}	
	\newtheorem{corollary}{Corollary}
	\newtheorem{assumption}{Assumption}	
	\newtheorem{remark}{Remark}	
	
	\title{Optimal Transmit Strategy for Multi-user MIMO WPT Systems With Non-linear Energy Harvesters}
	
	\author{\IEEEauthorblockN{Nikita Shanin, Laura Cottatellucci, and Robert Schober}
	\IEEEauthorblockA{\textit{Friedrich-Alexander-Universit\"{a}t Erlangen-N\"{u}rnberg (FAU), Germany} }	}
	
	\maketitle
	
\begin{abstract}
In this paper, we study multi-user multi-antenna wireless power transfer (WPT) systems, where each antenna at the energy harvesting (EH) nodes is connected to a dedicated non-linear rectifier.
We propose an optimal transmit strategy which maximizes a weighted sum of the average harvested powers at the EH nodes under a constraint on the power budget of the transmitter.
First, for multiple-input single-output (MISO) WPT systems, we show that it is optimal to transmit scalar symbols with an arbitrary phase and an amplitude, whose probability density function (pdf) has at most two mass points, using maximum ratio transmission (MRT) beamforming.
Then, we prove that for single-input multiple-output (SIMO) WPT systems, the optimal transmit symbol amplitudes are discrete random variables, whose pdf also has no more than two mass points.
For general multi-user MIMO WPT systems, we show that the optimal transmit strategy involves scalar unit-norm symbols with arbitrary phase and at most two beamforming vectors.
In order to determine these vectors, we formulate a non-convex optimization problem and obtain an optimal solution based on monotonic optimization.
Since the computational complexity of the optimal solution is high, we also propose a low-complexity iterative algorithm to obtain a suboptimal solution, which achieves nearly optimal performance.
Our simulation results reveal that the proposed transmit strategy for multi-user MIMO WPT systems outperforms two baseline schemes, which are based on a linear EH model and a single beamforming vector, respectively.
For a given transmit power budget, we show that the harvested power saturates when increasing the number of transmit antennas.
Finally, we observe that the harvested power region spanned by multiple EH nodes is convex and the power harvested at one EH node can be traded for a higher harvested power at the other nodes.
\let\thefootnote\relax\footnotetext{This paper was submitted in part to the Biennial Symposium on Communications, Saskatoon, Canada, 2021 \cite{Shanin2021a} and the IEEE 3rd International Workshop on Wirelessly Powered Systems and Networks, 2021 \cite{Shanin2021b}. }
\end{abstract}
\setcounter{footnote}{0}
 
	\section{Introduction}
	\label{Section:Introduction}	
	The device density in wireless communication networks has significantly increased over the past decades.
The current trends for wireless systems suggest that the number of connected devices will continue to grow over the next few years and a longer battery life for these devices is highly desirable \cite{Akyildiz2020}.
However, efficient charging of the batteries of wireless devices remains an unsolved problem.
Since \gls*{rf} signals are capable of transferring power, in recent years, far-field \gls*{wpt} has attracted significant attention \cite{Grover2010, Zhang2013, Clerckx2019, Boshkovska2015, Kim2020, Xiong2017, Ma2019, Clerckx2016a, Clerckx2018, Huang2017, Shen2020, Huang2018, Morsi2019, Shanin2020}.

In \cite{Grover2010}, the authors studied \gls*{siso} WPT systems and showed that the power transferred to the \gls*{eh} is maximized if a single sinusoidal signal is broadcasted by the \gls*{tx}. 
The authors of \cite{Zhang2013} extended these results to \gls*{mimo} WPT systems and showed that the \emph{input power} at the EH is maximized if a scalar input symbol and \emph{energy beamforming}, i.e., beamforming in the direction of the dominant eigenvector of the channel matrix, are employed at the \gls*{tx}.
Although the solutions developed in \cite{Grover2010} and \cite{Zhang2013} are optimal for the maximization of the power received by the EH, they do not necessarily maximize the \emph{harvested power} since practical EH circuits are non-linear \cite{Clerckx2019, Boshkovska2015, Kim2020}.
Hence, an accurate modeling of the EH circuit is crucial for the design of \gls*{wpt} systems \cite{Clerckx2019, Kim2020, Boshkovska2015, Xiong2017, Ma2019, Clerckx2016a, Clerckx2018, Huang2017, Shen2020, Huang2018, Morsi2019, Shanin2020}.

Practical EHs typically employ a rectenna, i.e., an antenna followed by a rectifier circuit that includes a non-linear element, namely, a diode.
The experimental results in \cite{Boshkovska2015} and \cite{Kim2020} showed that rectennas exhibit a non-linear behavior in both the low and high input power regimes.
In particular, for low input power levels, the rectifier non-linearity is caused by the non-linear forward bias current-voltage characteristic of the diode \cite{Horowitz1989}, whereas in the high input power regime, practical EH circuits suffer from saturation due to the breakdown effect of the diode \cite{Tietze2012}.
In order to capture these non-linearities, the authors in \cite{Boshkovska2015} modeled the harvested power as a parameterized sigmoidal function of the received power whose parameters depend on the waveform of the received signal.
The model in \cite{Boshkovska2015} is widely exploited for the design of \gls*{wpt} systems, e.g., \cite{Xiong2017, Ma2019}, where Gaussian distributed input signals are assumed.
In particular, in \cite{Xiong2017}, it was shown that, adopting the EH model from \cite{Boshkovska2015}, the energy beamforming proposed in \cite{Zhang2013} is also optimal for MIMO WPT systems.
Furthermore, to avoid saturation of the rectenna circuits, based on the model in \cite{Boshkovska2015}, the authors in \cite{Ma2019} proposed to split the RF power received at the EH node between several collocated rectifiers.

Although the model in \cite{Boshkovska2015} characterizes the non-linear behavior of rectenna circuits, it is applicable only for signals with a known fixed waveform and does not allow the optimization of the waveform of the transmit signal \cite{Clerckx2019}.
Therefore, the authors in \cite{Clerckx2016a} proposed a non-linear EH model derived from the Taylor series expansion of the current flow through the rectifying diode of the EH. 
Based on this model, the authors in \cite{Clerckx2018} studied a WPT system with multiple antennas at the TX and a single antenna at the EH node, i.e., a \gls*{miso} WPT system, and showed that the harvested power is maximized with energy beamforming \cite{Zhang2013}, which reduces to scaled \gls*{mrt} in this case.
However, in \cite{Huang2017}, it was shown that energy beamforming is not optimal for general MIMO WPT systems employing non-linear rectenna circuits.
In \cite{Shen2020}, the authors considered a MIMO WPT system, where the EH node is equipped with a single rectifier and proposed to combine the received signals  of different antennas in the RF domain to increase the power at the input of the rectifier and, thus, maximize the harvested power.
However, the practical implementation of the RF combining schemes considered in \cite{Ma2019} and \cite{Shen2020} requires complicated circuit designs and may also introduce associated losses, which are not desirable in practical systems \cite{Liang2005}.
Finally, the authors in \cite{Huang2018} considered a \gls*{mimo} WPT system, where each antenna of the EH node was equipped with a dedicated rectifier, and proposed an iterative algorithm to determine the TX beamforming vector that maximizes the weighted sum of powers harvested by the rectifiers.

Although the results in \cite{Clerckx2016a, Clerckx2018, Huang2017, Huang2018, Shen2020} provide important insights for the design of practical MIMO WPT systems, their applicability is limited to low input power levels at the EH since the saturation of the harvested power is neglected in the underlying EH model \cite{Clerckx2016a}.
A realistic EH model that accurately captures the rectenna non-linearity for both low and high input powers was developed in \cite{Morsi2019}.
The analysis in \cite{Morsi2019} showed that for \gls*{siso} WPT systems, it is optimal to adopt ON-OFF signaling at the \gls*{tx}, where the ON symbol and its probability are chosen to maximize the harvested power without saturating the EH while satisfying an average power constraint at the TX.
The optimality of ON-OFF signaling was confirmed in \cite{Shanin2020}, where a learning-based approach was employed to model non-linear rectenna circuits equipped with a single and multiple diodes, respectively \cite{Tietze2012}.
Finally, in \cite{Shanin2021a} and \cite{Shanin2021b}, which are the conference versions of this paper, exploiting the rectenna model derived in \cite{Morsi2019}, the authors studied the harvested power region of a two-user MISO WPT system and the maximum performance of a single-user MIMO WPT system, respectively.
However, to the best of the authors' knowledge, the problem of optimizing the transmit strategy for multi-user MIMO WPT systems, where the EH nodes are equipped with multiple rectennas exhibiting non-linear behavior in both the low and high input power regimes, has not been solved, yet.
We note that the EH model adopted in \cite{Shanin2021a, Shanin2021b} is a special case of the more general EH model considered in this paper.
Hence, the results obtained for two-user \gls*{miso} and single-user \gls*{mimo}  WPT systems in \cite{Shanin2021a} and \cite{Shanin2021b}, respectively, are special cases of the results presented in this paper.

In this paper, we aim at determining the optimal transmit strategy for multi-user MIMO WPT systems, where multiple EH nodes are equipped with multiple non-linear rectennas.
In order to take the non-linearity of the EH into account, we consider a general rectenna model characterized by a set of properties, which are typically satisfied for the practical rectenna circuits considered in the literature \cite{Zhang2013, Clerckx2019, Kim2020, Boshkovska2015, Xiong2017, Ma2019, Clerckx2016a, Clerckx2018, Huang2017, Shen2020, Huang2018, Morsi2019, Shanin2020}.
Where appropriate, we specialize our results to the non-linear EH model proposed in \cite{Morsi2019}.
We propose an optimal transmit strategy, which is characterized by the distribution of the transmit symbol vector that maximizes a weighted sum of the average harvested powers at the EH nodes subject to a constraint on the power budget of the TX.
The main contributions of this paper can be summarized as follows:
\begin{itemize}
	\item For MISO WPT systems, we show that the optimal transmit strategy comprises \gls*{mrt} beamforming and a scalar input symbol with an arbitrary phase and a discrete random amplitude, whose \gls*{pdf} has at most two mass points.
	The optimal \gls*{pdf} of the symbol amplitudes is the solution of an optimization problem, which is solved via a two-dimensional grid search \cite{Coope2001}.	
	Furthermore, for the EH model in \cite{Morsi2019}, we obtain a closed-form solution and show that the optimal pdf reduces to ON-OFF signaling.
	\item For multi-user SIMO WPT systems, we show that the optimal transmit symbol amplitude is a discrete random variable, whose distribution has at most two mass points and can also be obtained by a two-dimensional grid search.
	Similar to the MISO case, we show that for SIMO WPT systems with two rectennas modeled as in \cite{Morsi2019}, the optimal distribution can be obtained in closed-form and ON-OFF signaling is optimal if the power budget of the TX is low. 
	Furthermore, if it is affordable by the average power constraint of the TX, the discrete transmit symbol amplitudes are chosen to drive one or both rectifiers into saturation, respectively.
	Finally, for high average power budgets, the optimal policy is to saturate both rectifiers and the optimal pdf consists of a single mass point.  
	\item For general MIMO WPT systems, we show that the optimal transmit strategy employs a scalar input symbol and at most two beamforming vectors, which can be determined as solution of a non-convex optimization problem.
	The optimal solution of this problem is obtained via monotonic optimization \cite{Zhang2012}.
	\item To reduce the high computational complexity of determining the optimal beamforming vectors, we develop a low-complexity iterative algorithm based on \gls*{sdr} and \gls*{sca} to obtain a suboptimal solution. Our simulation results reveal that although the suboptimal solution for MIMO WPT systems has a much lower computational complexity than the optimal one, both solutions yield a similar performance.
	\item Our simulations show that the proposed MIMO WPT design outperforms two baseline schemes, one based on the linear EH model in \cite{Zhang2013} and the other based on a single beamforming vector at the TX.
	For the multi-user scenario and a given transmit power budget, we observe that the total average harvested power saturates when the TX is equipped with a large number of antennas.
	Finally, we observe that the harvested power region spanned by multiple EH nodes is convex and the average power harvested at one EH node can be traded for a higher harvested power at the other nodes.
\end{itemize}

The remainder of this paper is organized as follows. 
In Section II, we introduce the system model and discuss the adopted EH model.
In Section III, we formulate an optimization problem for the maximization of the weighted sum of the average harvested powers at the EH nodes and establish a preliminary mathematical result needed for solving the problem.
In Section IV, we determine the optimal transmit strategies for MISO, multi-user SIMO, and multi-user MIMO WPT systems, respectively.
In Section V, we provide numerical results to evaluate the performance of the proposed designs.
Finally, in Section VI, we draw some conclusions.

\emph{Notation:} Bold upper case letters $\boldsymbol{X}$ represent matrices and ${X}_{i,j}$ denotes the element of $\boldsymbol{X}$ in row $i$ and column $j$. 
Bold lower case letters $\boldsymbol{x}$ stand for vectors and ${x}_{i}$ is the $i^\text{th}$ element of $\boldsymbol{x}$.
$\boldsymbol{X}^H$, $\Tr{\boldsymbol{X}}$, and $\rank \{\boldsymbol{X}\}$ denote the Hermitian, trace, and rank of matrix $\boldsymbol{X}$, respectively.
The expectation with respect to random variable $x$ is denoted by $\mathbb{E}_x\{\cdot\}$. 
The real part of a complex number is denoted by $\Re\{ \cdot \}$.
$\boldsymbol{x}^\top$ and $\norm{\boldsymbol{x}}$ represent the transpose and L2-norm of $\boldsymbol{x}$, respectively.
The imaginary unit is denoted by $j$.
The sets of real, real non-negative, and complex numbers are denoted by $\mathbb{R}$, $\mathbb{R}_{+}$, and $\mathbb{C}$, respectively.
$\boldsymbol{1}_K$ and $\boldsymbol{0}_K$ represent column vectors comprising $K$ elements, where all elements are equal to $1$ and $0$, respectively.
The Dirac delta function is denoted by $\delta(x)$. 
$f'(x_0)$ denotes the first-order derivative of function $f(x)$ evaluated at point $x = x_0$.

	\section{System Model and Preliminaries}
	\label{Section:SystemModelPreliminaries}
	In this section, we present the MIMO WPT system model and discuss the adopted multi-antenna EH model.
		\subsection{System Model}
		\label{Section:SystemModel}
		\begin{figure}[!t]
	\centering
	\includegraphics[width=0.45\textwidth, draft = false]{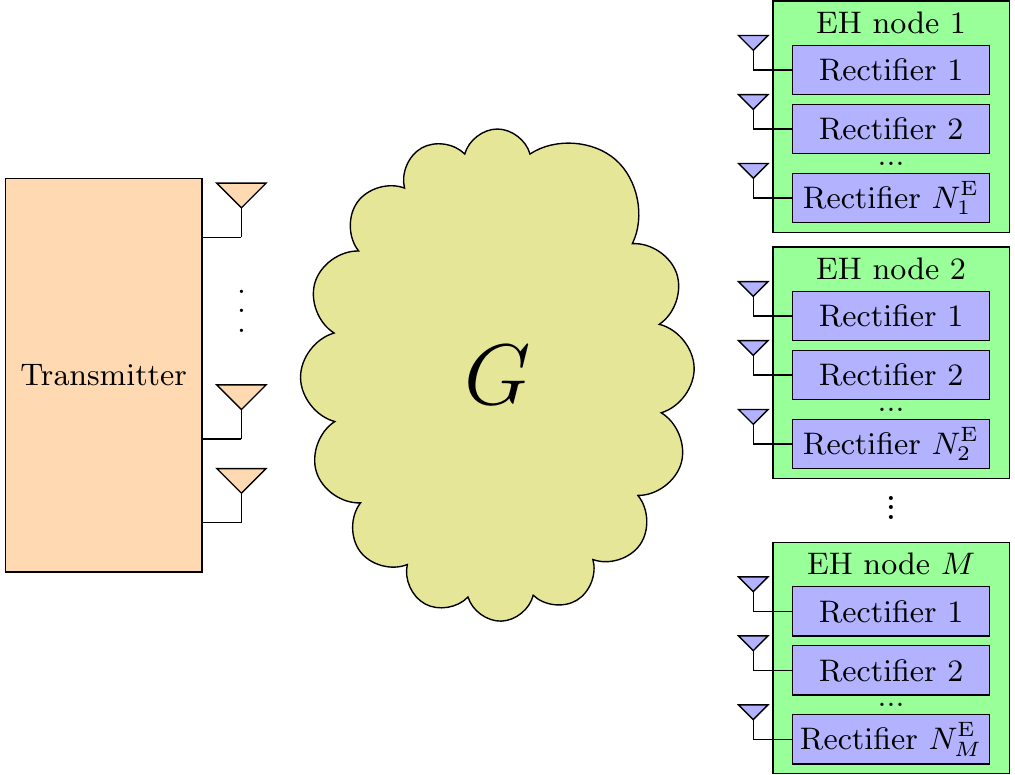}
	\caption{Multi-user MIMO WPT system comprising a multi-antenna TX and $M$ multi-antenna EH nodes, where node $m, m\in\{1,2,\dots,M\}$, is equipped with $N_m^\text{E}$ rectennas. Here, $\boldsymbol{G}$ denotes the channel between the TX and the EH nodes.}
	\label{Fig:SystemModel}
\end{figure}
We consider a narrow-band multi-user MIMO \gls*{wpt} system comprising a \gls*{tx} with $N^\text{T} \geq 1$ antennas and $M$ EH nodes, where EH node $m$, $m \in \{1,2,\dots,M\}$, is equipped with $N^\text{E}_m \geq 1$ antennas, see Fig.~\ref{Fig:SystemModel}.
The TX broadcasts a pulse-modulated RF signal, whose equivalent complex baseband (ECB) representation is modeled as $\boldsymbol{x}(t) = \sum_n \boldsymbol{x}[n] \psi(t-nT)$, where $\boldsymbol{x}[n] \in \mathbb{C}^{N^\text{T}}$ is the transmitted vector in time slot $n$, $\psi(t)$ is the transmit pulse with rectangular shape, and $T$ is the symbol duration.
Transmit vectors $\boldsymbol{x}[n]$ are mutually independent realizations of a random vector $\boldsymbol{x}$, whose \gls*{pdf} is denoted by $p_{\boldsymbol{x}}(\boldsymbol{x})$.

The ECB channel between the TX and antenna $p$ of EH node $m$ is characterized by row-vector $\boldsymbol{g}^m_p \in \mathbb{C}^{1 \times N^\text{T}}$, $p \in \{1,2,\dots,N^\text{E}_m\}$.
Thus, the RF signal received in time slot $n$ at antenna $p$ of EH node $m$ is given by ${z^\text{RF}_{p,m}}(t) = \sqrt{2} \Re\{ \boldsymbol{g}^m_p \boldsymbol{x}(t) \exp(j 2 \pi f_c t) \} $, where $f_c$ denotes the carrier frequency.
The noise received at the EH nodes is ignored since its contribution to the harvested energy is negligible.

		\subsection{Energy Harvester Model}
		\label{Section:EhModel}
		In this paper, we assume that EH node $m$ is equipped with $N^\text{E}_m$ rectennas, i.e., each antenna is connected to a dedicated rectifier, see Fig.~\ref{Fig:SystemModel}.
Thus, the received ECB signal\footnotemark\hspace*{0pt} at rectenna $p \in \{1,2,\dots, N^\text{E}_m \}$ of EH node $m$ in time slot $n$ is given by $z^m_p[n] = \boldsymbol{g}^m_p \boldsymbol{x}[n]$.
\footnotetext{We note that the EH nodes do not convert the RF signal into baseband. However, since the amount of the harvested power can be expressed as a function of the ECB signal received at the EH node, see, e.g., \cite{Morsi2018}, \cite{Clerckx2018}, to simplify the notation, we use the ECB signal representation of the received RF signal.}

Each rectenna comprises an antenna, a matching circuit, a non-linear rectifier with a low-pass filter, and a load resistor \cite{Clerckx2019, Morsi2019, Shanin2020}. 
In order to maximize the power transferred to the rectifier, the matching circuit is typically well-tuned to the carrier frequency $f_c$ and is designed to match the input impedance of the non-linear rectifier circuit with the output impedance of the antenna \cite{LePolozec2016}.
The rectifier is an electrical circuit that comprises a non-linear diode and a low-pass filter to convert the RF signal ${z^\text{RF}_{p,m}}(t)$ received by rectenna $p$ of EH $m$ to a \gls*{dc} signal at the load resistor $R_\text{L}$ of the rectenna.

In this paper, we make the following assumptions concerning the rectenna circuit.
\begin{assumption}	
	\label{Assumption:Memoryless}
	The rectenna circuit is memoryless, i.e., the amount of power harvested in time slot $n$ depends on the ECB signal received in time slot $n$ only.
\end{assumption}
\begin{assumption}
	\label{Assumption:Narrowband}
	The harvested power depends on the magnitude of the received ECB signal only and is independent of its phase.	
\end{assumption}
\begin{assumption}	
	\label{Assumption:IncreasingBounded}
	The mapping between the received ECB signal $z$ and the harvested power $P_\text{L}$ is characterized by a non-linear and monotonically non-decreasing function\footnotemark $P_\text{L} = \phi(|z|^2)$.
\end{assumption}
\footnotetext{In this paper, to simplify the notation, we additionally assume that all electrical circuits equipped at the rectennas of the EH nodes are identical, i.e., all rectennas are characterized by the same function $\phi(\cdot)$. The generalization to the case, where the rectennas employ different electrical circuits, is straightforward.}

Assumption~\ref{Assumption:Memoryless} is justified if the symbol duration $T$ is sufficiently large.
In this case, we can neglect the ripples of the voltage level across the load resistor $R_\text{L}$ and the charging and discharging times of the reactive elements of the circuit.
Thus, we can assume that the output voltage level of rectenna $p$ of EH node $m$ in time slot $n$ is constant and depends only on the signal $z_p^m[n]$ \cite{Clerckx2016, Morsi2019, Shanin2020}.
Assumption~\ref{Assumption:Narrowband} is justified since, for the considered narrow-band signals, the rectenna circuit behaves as an envelope detector \cite{Horowitz1989} and, thus, its behavior is fully characterized by the magnitude $|z_p^m[n]|$ of the received ECB signal $z_p^m[n]$.
Assumption~\ref{Assumption:IncreasingBounded} is satisfied since typical rectenna circuits include a diode that has a non-linear non-decreasing current-voltage characteristic \cite{Tietze2012, Clerckx2016, Morsi2019}.

\textbf{Example:}
In this paper, as an example for an EH model that satisfies Assumptions \ref{Assumption:Memoryless}-\ref{Assumption:IncreasingBounded}, we adopt the model proposed in \cite{Morsi2019}. 
The corresponding power harvested by the rectenna as a function of the magnitude of the received ECB signal $z$ is given as follows:
\begin{equation}
	\phi(|{z}|^2) = \min\big\{ \varphi(|z|^2) , \varphi(A_s^2) \big\},
	\label{Eqn:RaniaModel}
\end{equation}
\noindent where $\varphi(|z|^2) = \bigg[\frac{1}{a} W_0 \bigg(a\exp(a) I_0 \Big(B\sqrt{2|{z}|^2}\Big) \bigg)-1 \bigg]^2 I_s^2 R_L$, $a = \frac{I_s(R_L + R_s)}{\mu V_\text{T}}$, $B = \frac{1}{\mu V_\text{T} \sqrt{\Re \{ 1/Z_a^* \} }}$, and $W_0(\cdot)$ and $I_0(\cdot)$ are the principal branch of the Lambert-W function and the modified Bessel function of the first kind and zero order, respectively.
Here, $Z_a^*$, $V_\text{T}$, $I_s$, $R_s$, and $\mu \in [1,2]$ are parameters of the rectenna circuit, namely, the complex-conjugate of the input impedance of the rectifier circuit, the thermal voltage, the reverse bias saturation current, the series resistance, and the ideality factor of the diode, respectively. 
These parameters depend on the circuit elements and are independent of the received signal.
Finally, since for large input power levels, rectenna circuits are driven into saturation \cite{Tietze2012, Boshkovska2015, Morsi2019, Shanin2020}, the function in (\ref{Eqn:RaniaModel}) is bounded, i.e., $\phi(|z|^2) \leq \phi(A_s^2) , \; \forall z \in \mathbb{C}$, where $A_s$ is the minimum input signal magnitude level at which the output power starts to saturate.

		\section{Problem Formulation and Useful Result}
		\label{Section:ProblemFormulation}
		In this section, we formulate an optimization problem for the maximization of the weighted average harvested power of the considered multi-user MIMO WPT system.
		Then, to obtain a preliminary result needed for solving this problem, we formulate and solve an auxiliary optimization problem, where we maximize the expected value of a function of a one-dimensional random variable under a constraint on its mean value.
		\subsection{Problem Formulation}
		We characterize the transmit strategy via the pdf $p_{\boldsymbol{x}}(\boldsymbol{x})$ of transmit symbol vector $\boldsymbol{x}$.
The objective of the proposed transmit strategy is to maximize the weighted average power harvested at the EH nodes under an average power constraint at the TX.
Thus, we formulate the following optimization problem:
\begin{subequations}
	\begin{align}
	\maximize_{ {p}_{\boldsymbol{x}} } \quad &\overline{\Phi}(p_{\boldsymbol{x}}) 
	\label{Eqn:WPT_GeneralProblem_Obj} \\
	\subjectto \quad & \int_{\boldsymbol{x}} \norm{\boldsymbol{x}}^2 p_{\boldsymbol{x}}(\boldsymbol{x}) d \boldsymbol{x} \leq P_x, \label{Eqn:WPT_GeneralProblem_C1}\\
	&\int_{\boldsymbol{x}} p_{\boldsymbol{x}}(\boldsymbol{x}) d \boldsymbol{x} = 1, \label{Eqn:WPT_GeneralProblem_C2}
	\end{align}
	\label{Eqn:WPT_GeneralProblem}
\end{subequations}
\noindent \hspace*{-5pt}where the objective function is the weighted sum of the average harvested powers at the EH nodes defined as
\begin{equation}
\overline{\Phi}(p_{\boldsymbol{x}}) = \sum_{m = 1}^M \xi_m \mathbb{E}_{\boldsymbol{x}} \big\{\psi_m(\boldsymbol{x})\big\}.
\label{Eqn:EhUtilityFunction}
\end{equation}
Here, $\psi_m(\boldsymbol{x}) = \sum_{p=1}^{N^\text{E}_m} \phi(|\boldsymbol{g}_p^m \boldsymbol{x}|^2)$ is the total power harvested by EH node $m$ and $\xi_m \geq 0, m\in\{1,2,\dots,M\}, \sum_m \xi_m = 1,$ is the weight for EH node $m$ \cite{Shen2020}.
We note that the weights associated with the users allow the TX to control the distribution of the harvested power among the EH nodes.
In particular, if weight $\xi_m$ is increased, the optimal transmit strategy will favor EH node $m$ and increase the average harvested power $\mathbb{E}\{\psi_m(\boldsymbol{x})\}$ at EH node $m$ at the expense of the average harvested powers at the other EH nodes.
Furthermore, we impose constraints (\ref{Eqn:WPT_GeneralProblem_C1}) and (\ref{Eqn:WPT_GeneralProblem_C2}) to limit the transmit power budget at the TX and ensure that $p_{\boldsymbol{x}}(\boldsymbol{x})$ is a valid pdf, respectively.

\begin{remark}
	Optimization problem (\ref{Eqn:WPT_GeneralProblem}) may have an infinite number of solutions.
	In particular, for a general EH model satisfying Assumptions~\ref{Assumption:Memoryless} - \ref{Assumption:IncreasingBounded}, since $\| \boldsymbol{x} \|_2$ and the average harvested power $\overline{\Phi}(\cdot)$ are invariant under phase rotation of the transmit symbol vector $\boldsymbol{x}$, given an optimal pdf $p^*_{\boldsymbol{x}}(\boldsymbol{x})$ and a random phase $\phi_x \in [-\pi, \pi)$ with an arbitrary pdf $p_{\phi_x}(\phi_x)$, the random vector $\boldsymbol{\tilde{x}}=\exp(j\phi_x)\boldsymbol{x}$ is still a solution of (\ref{Eqn:WPT_GeneralProblem}) \cite{Shanin2020}.
	Furthermore, we note that, for bounded $\phi(\cdot)$, such as (\ref{Eqn:RaniaModel}), if affordable by the power budget $P_x$, there may be an infinite number of pdfs that drive all the rectifiers of the EH into saturation while satisfying constraints (\ref{Eqn:WPT_GeneralProblem_C1}) and (\ref{Eqn:WPT_GeneralProblem_C2}).
	Thus, in the following, we determine one pdf $p_{\boldsymbol{x}}^*(\boldsymbol{x})$ that solves (\ref{Eqn:WPT_GeneralProblem}).	
\end{remark}
	
As we will see in Section~\ref{Section:OptimalTransmitStrategies}, the optimal solution of (\ref{Eqn:WPT_GeneralProblem}) leverages the solution of a related auxiliary optimization problem.
In the next subsection, we solve this auxiliary problem, namely, the maximization of the expectation of a non-decreasing function $f(\nu)$ of a scalar random variable $\nu$ under a constraint on the mean value of $\nu$.

			\subsection{Auxiliary Optimization Problem}
			Let us consider the following auxiliary optimization problem:
\begin{equation}
	\maximize_{ {p}_{\nu} } \; \mathbb{E}_{\nu} \{f(\nu)\} \quad \subjectto \; \mathbb{E}_{\nu} \{\nu\} \leq A_{\nu},
	\label{Eqn:GeneralOptimizationProblem}
\end{equation}
\noindent whose solution is the pdf $p_{\nu}(\nu)$ which maximizes the expectation of $f(\nu)$ under a constraint on the mean value of $\nu$.
In order to solve (\ref{Eqn:GeneralOptimizationProblem}), let us first define the slope of the straight line connecting points $\big(\nu_1, f(\nu_1)\big)$ and $\big(\nu_2, f(\nu_2)\big)$, where $\nu_2 > \nu_1$, as follows:
\begin{equation}
	\sigma(\nu_1,\nu_2; f) = \frac{f(\nu_2) - f(\nu_1)}{\nu_2 - \nu_1}.
	\label{Eqn:SlopeFunctionF}
\end{equation}
Then, we establish an upper-bound on $\mathbb{E}_\nu \{f(\nu)\}$.
We note that if $f(\nu)$ is convex (concave), an upper-bound on $\mathbb{E}_{\nu}\{f(\nu)\}$ is given by the Edmundson-Madansky (Jensen's) inequality, e.g., \cite{Dokov2002}. 
However, since we intend to apply this result to the weighted sum of functions $\psi_m(\cdot), m\in\{1,2,\dots,M\},$ in (\ref{Eqn:WPT_GeneralProblem}), which are not necessarily convex or concave, in the following lemma, we extend the Edmundson-Madansky and Jensen's inequalities to arbitrary non-decreasing functions $f(\nu)$ and determine an upper-bound on the expectation of $f(\nu)$.

\begin{lemma}
	Let us consider a non-decreasing function $f(\nu)$ of random variable $\nu$. 
	Then, for a given mean value of $\nu$, $\overline{\nu} = \mathbb{E}_\nu\{\nu\}$, the expectation of $f(\nu)$ is upper-bounded by the following inequality:
	\begin{equation}
		\mathbb{E}_\nu \{f(\nu)\} \leq E_f\big( \overline{\nu} \big),
		\label{Eqn:LemmaUpperBound}
	\end{equation}
	where $E_f( \overline{\nu} ) = \beta f(\nu_1^*) + (1-\beta) f(\nu_2^*)$ and $\beta = \frac{\nu_2^* - \overline{\nu}}{\nu_2^* - \nu_1^*}$. 
	Here, $\nu_1^*$ and $\nu_2^*$ are given by $\nu_1^* = \argmin_{\nu_1 \leq \overline{\nu} } \gamma(\nu_1; f)$, where $\gamma(\nu_1; f) = \max_{\nu_2 \geq \overline{\nu}} \sigma(\nu_1,\nu_2; f)$, and $\nu_2^* = \argmax_{\nu_2 \geq \overline{\nu}} \sigma(\nu_1^*,\nu_2; f)$, respectively.
	Furthermore, inequality (\ref{Eqn:LemmaUpperBound}) holds with equality if the pdf of $\nu$ is given by $p_\nu^*(\nu) = \beta \delta(\nu - \nu_1^*) + (1-\beta) \delta(\nu - \nu_2^*)$.
	\label{Theorem:Lemma1}	
\end{lemma}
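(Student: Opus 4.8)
The plan is to reduce the distributional bound \eqref{Eqn:LemmaUpperBound} to a purely pointwise, geometric statement about a single affine function, after which the bound on $\mathbb{E}_\nu\{f(\nu)\}$ follows from linearity of expectation. Concretely, let $s^* = \sigma(\nu_1^*, \nu_2^*; f) = \gamma(\nu_1^*; f)$ denote the slope selected by the minimax construction, and let $\ell(\nu) = f(\nu_1^*) + s^*(\nu - \nu_1^*)$ be the chord through $\big(\nu_1^*, f(\nu_1^*)\big)$ and $\big(\nu_2^*, f(\nu_2^*)\big)$. The central claim I would establish is that this chord dominates $f$ globally, i.e.,
\[
	f(\nu) \leq \ell(\nu), \quad \text{for all admissible } \nu.
\]
Once this is shown, the lemma is immediate: since $\ell$ is affine and the mean is fixed at $\overline{\nu}$, taking expectations gives $\mathbb{E}_\nu\{f(\nu)\} \leq \mathbb{E}_\nu\{\ell(\nu)\} = \ell(\overline{\nu})$, and a direct computation shows $\ell(\overline{\nu}) = \beta f(\nu_1^*) + (1-\beta) f(\nu_2^*) = E_f(\overline{\nu})$ with $\beta = (\nu_2^* - \overline{\nu})/(\nu_2^* - \nu_1^*)$, as this is precisely the linear interpolation of the endpoint values at $\overline{\nu}$. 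Tightness then follows by inspection: the two-point pdf $p_\nu^*$ has mean $\beta\nu_1^* + (1-\beta)\nu_2^* = \overline{\nu}$ and attains $\mathbb{E}_\nu\{f(\nu)\} = \beta f(\nu_1^*) + (1-\beta)f(\nu_2^*) = E_f(\overline{\nu})$, so \eqref{Eqn:LemmaUpperBound} holds with equality.

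To prove the chord bound I would split the domain at $\overline{\nu}$. For $\nu \geq \overline{\nu}$ the argument is direct: since $\nu_2^*$ maximizes $\sigma(\nu_1^*, \cdot\,; f)$ over $[\overline{\nu}, \infty)$, every such $\nu$ obeys $\sigma(\nu_1^*, \nu; f) \leq \gamma(\nu_1^*;f) = s^*$, which rearranges to $f(\nu) \leq f(\nu_1^*) + s^*(\nu - \nu_1^*) = \ell(\nu)$. The region $\nu < \overline{\nu}$ is where the outer minimization over $\nu_1$ enters and is the main obstacle, because monotonicity of $f$ alone does not force $f(\nu) \leq \ell(\nu)$ there. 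Here I would exploit that $\nu_1^*$ minimizes $\gamma(\cdot\,; f)$, so $\gamma(\nu; f) \geq \gamma(\nu_1^*; f) = s^*$; hence there exists a point $\nu_2(\nu) \geq \overline{\nu}$ with $\sigma(\nu, \nu_2(\nu); f) \geq s^*$. Combining this with the bound $f(\nu_2(\nu)) \leq \ell(\nu_2(\nu))$ already established on the first region and chaining the two slope inequalities yields $f(\nu) \leq f(\nu_2(\nu)) - s^*(\nu_2(\nu) - \nu) \leq \ell(\nu_2(\nu)) - s^*(\nu_2(\nu) - \nu) = \ell(\nu)$, which completes the claim.

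The remaining issues are technical rather than conceptual, and I would address them explicitly. First, the $\argmin$ and $\argmax$ defining $\nu_1^*$ and $\nu_2^*$ must be attained; I expect this to follow from continuity of $f$ together with boundedness, and I would record the corresponding regularity assumption (for the EH model of interest $\phi$, and hence $f$, saturates, so $\gamma(\cdot\,;f)$ is finite and the extrema exist). Second, the degenerate case $\nu_1^* = \nu_2^* = \overline{\nu}$ must be treated separately: there the bound collapses to $\mathbb{E}_\nu\{f(\nu)\} \leq f(\overline{\nu})$ and the optimal pdf is the single mass point $\delta(\nu - \overline{\nu})$, obtained from $p_\nu^*$ in the limit. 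Finally, although the chord inequality itself does not require monotonicity of $f$, the non-decreasing assumption guarantees $s^* \geq 0$ and, when the lemma is invoked for problem \eqref{Eqn:WPT_GeneralProblem} via \eqref{Eqn:GeneralOptimizationProblem}, ensures that the mean constraint is active so that one may take $\overline{\nu} = A_\nu$; I would point this out at the place of application.
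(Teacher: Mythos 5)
Your proof is correct and follows essentially the same route as the paper's: establish that the chord through $\big(\nu_1^*, f(\nu_1^*)\big)$ and $\big(\nu_2^*, f(\nu_2^*)\big)$ is a global affine majorant of $f$ (splitting the domain at $\overline{\nu}$ and using the max/min properties of $\nu_2^*$ and $\nu_1^*$), then take expectations of the affine function. Your handling of the region $\nu < \overline{\nu}$ via $\gamma(\nu;f) \geq \gamma(\nu_1^*;f)$ and chaining through an intermediate point $\nu_2(\nu)$ is in fact slightly more careful than the paper's direct assertion that $\nu_1^*$ minimizes $\sigma(\cdot,\nu_2^*;f)$, which does not follow immediately from the minimax definition.
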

\begin{proof}
	Please refer to Appendix \ref{Appendix:LemmaProof}.
\end{proof}

\begin{figure}[!t]
	\centering
	\subfigure[]{
		\includegraphics[width=0.32\textwidth, draft=false]{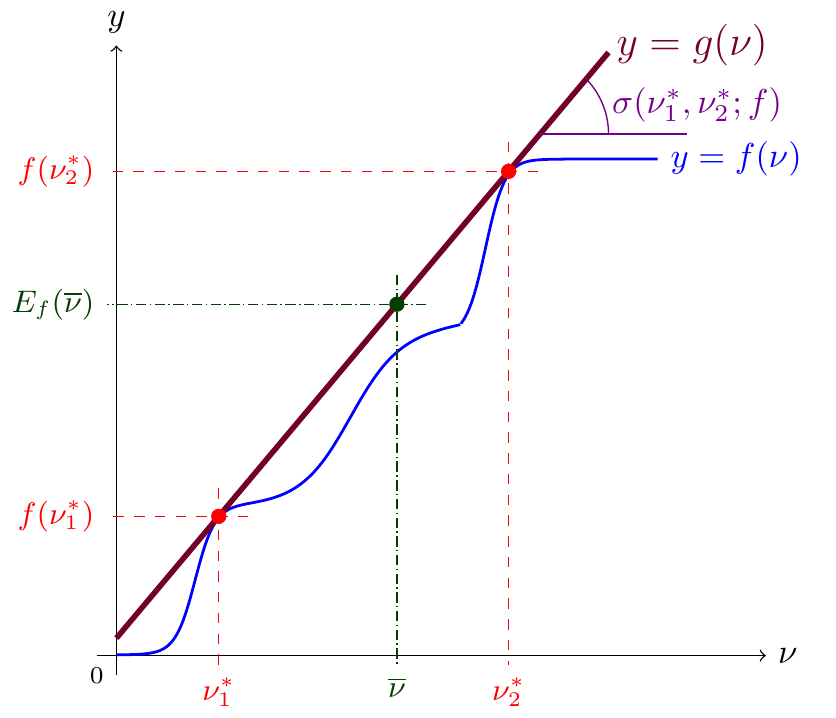}
		\label{Fig:LemmaIllustrationA}
	}
	\quad
	\subfigure[]{
		\includegraphics[width=0.32\textwidth, draft=false]{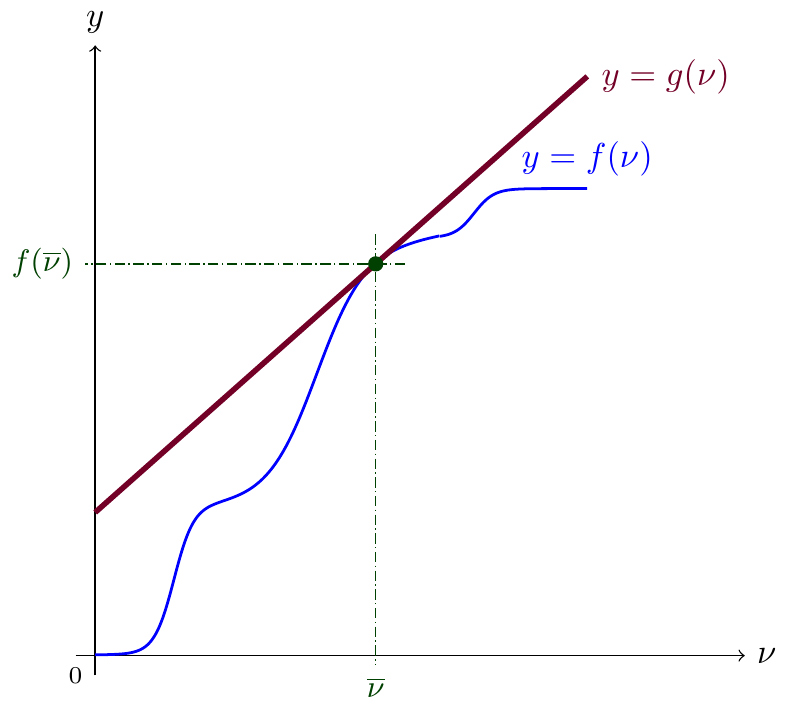}
		\label{Fig:LemmaIllustrationB}
	}	
	\caption{Illustrations of Lemma~\ref{Theorem:Lemma1} and Corollary~\ref{Theorem:Corollary1}.}
	\label{Fig:LemmaIllustration}
\end{figure}
Lemma~\ref{Theorem:Lemma1} is applicable for arbitrary non-decreasing functions $f(\nu)$.
In the following corollary, we show that for a certain class of functions $f(\nu)$, the result in Lemma~\ref{Theorem:Lemma1} can be significantly simplified.
\begin{corollary}
	Let us consider a non-decreasing function $f(\nu)$ of random variable $\nu$. 
	If function $f(\cdot)$ is differentiable at $\overline{\nu} = \mathbb{E}_{\nu} \{\nu\}$ and the following property holds:
	\begin{equation}
		f'(\overline{\nu}) (\overline{\nu}-\nu) \leq f(\overline{\nu}) - f(\nu), \; \forall \nu \in \mathbb{R},
		\label{Eqn:Corollary1_Condition}
	\end{equation}
	\noindent then the expectation of $f(\nu)$ is upper-bounded by $\mathbb{E}_\nu \{f(\nu)\} \leq f\big( \overline{\nu} \big)$, where the inequality holds with equality if the pdf of $\nu$ is given by $p_\nu^*(\nu) = \delta(\nu - \overline{\nu})$.
	\label{Theorem:Corollary1}	
\end{corollary}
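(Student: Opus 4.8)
The plan is to specialize Lemma~\ref{Theorem:Lemma1} to the subclass of non-decreasing functions obeying (\ref{Eqn:Corollary1_Condition}) and to show that the two optimal mass points $\nu_1^*$ and $\nu_2^*$ both collapse onto $\overline{\nu}$, so that the two-point upper bound $E_f(\overline{\nu})$ degenerates into $f(\overline{\nu})$. The guiding observation is that, after rearranging, (\ref{Eqn:Corollary1_Condition}) reads $f(\nu) \leq f(\overline{\nu}) + f'(\overline{\nu})(\nu - \overline{\nu})$ for all $\nu \in \mathbb{R}$; geometrically, the tangent line to $f$ at $\overline{\nu}$ lies (weakly) above the graph of $f$ everywhere.

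First I would use this tangent-line inequality to control the chord slopes $\sigma(\cdot,\cdot;f)$ from (\ref{Eqn:SlopeFunctionF}). For any $\nu_2 \geq \overline{\nu}$ it gives $f(\nu_2) - f(\overline{\nu}) \leq f'(\overline{\nu})(\nu_2 - \overline{\nu})$, hence $\sigma(\overline{\nu}, \nu_2; f) \leq f'(\overline{\nu})$, and differentiability at $\overline{\nu}$ shows this bound is attained in the limit $\nu_2 \to \overline{\nu}^+$, so that $\gamma(\overline{\nu}; f) = f'(\overline{\nu})$. Conversely, for any $\nu_1 < \overline{\nu}$, retaining only the $\nu_2 = \overline{\nu}$ term in the inner maximization and invoking (\ref{Eqn:Corollary1_Condition}) in the form $f(\overline{\nu}) - f(\nu_1) \geq f'(\overline{\nu})(\overline{\nu} - \nu_1)$ yields $\gamma(\nu_1; f) \geq \sigma(\nu_1, \overline{\nu}; f) \geq f'(\overline{\nu})$. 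Thus $\gamma(\nu_1; f)$ is minimized over $\nu_1 \leq \overline{\nu}$ at $\nu_1^* = \overline{\nu}$.

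It then follows directly that $\nu_2^* = \argmax_{\nu_2 \geq \overline{\nu}} \sigma(\overline{\nu}, \nu_2; f)$ is likewise attained at $\nu_2^* = \overline{\nu}$, so both mass points coincide with $\overline{\nu}$. The two-point bound $E_f(\overline{\nu}) = \beta f(\nu_1^*) + (1-\beta) f(\nu_2^*)$ therefore reduces to $f(\overline{\nu})$ irrespective of the weight $\beta$, and the maximizing distribution collapses to the single Dirac mass $p_\nu^*(\nu) = \delta(\nu - \overline{\nu})$, which satisfies the mean constraint with $\mathbb{E}_\nu\{\nu\} = \overline{\nu}$ and attains equality since $\mathbb{E}_\nu\{f(\nu)\} = f(\overline{\nu})$.

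The hard part will be the degenerate limit $\nu_1^* = \nu_2^* = \overline{\nu}$, where the weight $\beta = (\nu_2^* - \overline{\nu})/(\nu_2^* - \nu_1^*)$ takes the indeterminate form $0/0$ and the extrema defining $\nu_1^*$ and $\nu_2^*$ are approached in a limit rather than attained at distinct interior points, so the $\argmin$ and $\argmax$ in Lemma~\ref{Theorem:Lemma1} must be read as infima/suprema. To sidestep these technicalities, an arguably cleaner route is to bypass Lemma~\ref{Theorem:Lemma1} and take the expectation of the tangent-line inequality directly: $\mathbb{E}_\nu\{f(\nu)\} \leq f(\overline{\nu}) + f'(\overline{\nu})\big(\mathbb{E}_\nu\{\nu\} - \overline{\nu}\big) = f(\overline{\nu})$, with equality for $p_\nu^*(\nu) = \delta(\nu - \overline{\nu})$.
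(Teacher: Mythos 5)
Your closing ``cleaner route'' --- taking the expectation of the tangent-line inequality $f(\nu) \leq f(\overline{\nu}) + f'(\overline{\nu})(\nu-\overline{\nu})$ and using $\mathbb{E}_\nu\{\nu\}=\overline{\nu}$ --- is exactly the paper's proof, which defines $g(\nu) = f'(\overline{\nu})(\nu-\overline{\nu}) + f(\overline{\nu})$, notes $g(\nu)\geq f(\nu)$ for all $\nu$, and concludes $\mathbb{E}_\nu\{f(\nu)\} \leq \mathbb{E}_\nu\{g(\nu)\} = g(\overline{\nu}) = f(\overline{\nu})$ with equality for $p_\nu^*(\nu)=\delta(\nu-\overline{\nu})$. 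Your preliminary detour through Lemma~\ref{Theorem:Lemma1} is conceptually sound but, as you yourself observe, founders on the degenerate $\nu_1^*=\nu_2^*$ limit where $\beta$ is indeterminate and the extrema are only attained as suprema/infima; you were right to discard it in favor of the direct argument.
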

\begin{proof}
	Please refer to Appendix \ref{Appendix:Corollary1Proof}.
\end{proof}

We note that $E_f\big( \mathbb{E}_\nu \{\nu\} \big)$ in Lemma~\ref{Theorem:Lemma1} can be interpreted as the value of linear function $g(\nu)$, defined by points $\big(\nu_1^*, f(\nu_1^*)\big)$ and $\big(\nu_2^*, f(\nu_2^*)\big)$, at $\overline{\nu} = \mathbb{E}_\nu \{\nu\}$, where the choice of $\nu_1^*$ and $\nu_2^*$ ensures that $g(\nu) \geq f(\nu)$, $\forall \nu \in \mathbb{R}$, see Fig.~\ref{Fig:LemmaIllustrationA}.
However, if condition (\ref{Eqn:Corollary1_Condition}) in Corollary~\ref{Theorem:Corollary1} is satisfied, points $\nu_1^*$ and $\nu_2^*$ coincide, i.e., $\nu_1^* = \lim_{\epsilon \to 0} \overline{\nu} - \epsilon$ and $\nu_2^* = \lim_{\epsilon \to 0} \overline{\nu} + \epsilon$, and then, $g(\nu) = f(\overline{\nu}) + f'(\overline{\nu}) (\nu-\overline{\nu}) \geq f(\nu)$, $\forall \nu \in \mathbb{R}$, see Fig.~\ref{Fig:LemmaIllustrationB}.

Finally, exploiting Lemma~\ref{Theorem:Lemma1} and Corollary~\ref{Theorem:Corollary1}, in the following corollary, we determine the optimal pdf $p_\nu^*(\nu)$ of random variable $\nu$ solving optimization problem (\ref{Eqn:GeneralOptimizationProblem}).
\begin{corollary}
	A solution\footnotemark\hspace*{1pt} of optimization problem (\ref{Eqn:GeneralOptimizationProblem}) is a discrete pdf given by $p_\nu^*(\nu) = \delta(\nu - A_\nu)$ if condition (\ref{Eqn:Corollary1_Condition}) holds, and $p_\nu^*(\nu) = \beta \delta(\nu - \nu_1^*) + (1-\beta) \delta(\nu - \nu_2^*)$, where
	$\nu_1^* = \argmin_{\nu_1 \leq A_\nu } \gamma(\nu_1; f)$, $\gamma(\nu_1; f) = \max_{\nu_2 \geq A_\nu} \sigma(\nu_1,\nu_2; f)$, $\nu_2^* = \argmax_{\nu_2 \geq A_\nu} \sigma(\nu_1^*,\nu_2; f)$, $\beta = \frac{\nu_2^* - A_\nu}{\nu_2^* - \nu_1^*}$, otherwise.
	\label{Theorem:Corollary2}	
\end{corollary}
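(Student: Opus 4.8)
The plan is to build directly on Lemma~\ref{Theorem:Lemma1} and Corollary~\ref{Theorem:Corollary1}, reducing the inequality-constrained problem (\ref{Eqn:GeneralOptimizationProblem}) to the equality-constrained situation those results already handle. For any feasible pdf $p_\nu$ with mean $\overline{\nu} = \mathbb{E}_\nu\{\nu\} \leq A_\nu$, Lemma~\ref{Theorem:Lemma1} yields $\mathbb{E}_\nu\{f(\nu)\} \leq E_f(\overline{\nu})$, with equality attained by the two-mass-point pdf constructed at $\overline{\nu}$; moreover that pdf has mean exactly $\overline{\nu}$ and is thus feasible. Consequently the optimal value of (\ref{Eqn:GeneralOptimizationProblem}) equals $\max_{\overline{\nu} \leq A_\nu} E_f(\overline{\nu})$, and the entire problem collapses to a one-dimensional maximization of $E_f$ over the admissible means, followed by a feasibility check on the maximizing two-point pdf.

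First I would establish that $E_f(\overline{\nu})$ is monotonically non-decreasing in $\overline{\nu}$, which is the crux and the step I expect to be the main obstacle, since it is exactly the new content beyond Lemma~\ref{Theorem:Lemma1}. The cleanest route is the interpretation noted after Corollary~\ref{Theorem:Corollary1}: $E_f(\overline{\nu})$ is the value at $\overline{\nu}$ of the line through $\big(\nu_1^*, f(\nu_1^*)\big)$ and $\big(\nu_2^*, f(\nu_2^*)\big)$ that dominates $f$ everywhere, so that $E_f$ coincides with the concave hull of $f$. I would then show that the concave hull of a non-decreasing $f$ is itself non-decreasing: were it to decrease on some interval, concavity would drive it to $-\infty$ as $\nu \to \infty$, contradicting the lower bound $E_f(\nu) \geq f(\nu)$ together with the fact that a non-decreasing $f$ is bounded below on any half-line $[\nu_0,\infty)$. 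It follows that $E_f(\overline{\nu}) \leq E_f(A_\nu)$ for every $\overline{\nu} \leq A_\nu$, whence $\max_{\overline{\nu} \leq A_\nu} E_f(\overline{\nu}) = E_f(A_\nu)$; in other words, the mean constraint is active at the optimum.

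Having pinned the optimal mean to $\overline{\nu} = A_\nu$, I would finish by invoking Lemma~\ref{Theorem:Lemma1} and Corollary~\ref{Theorem:Corollary1} evaluated at this value. The two-mass-point pdf $p_\nu^*(\nu) = \beta\delta(\nu-\nu_1^*) + (1-\beta)\delta(\nu-\nu_2^*)$, with $\nu_1^*$, $\nu_2^*$, and $\beta$ defined exactly as in the statement but with $\overline{\nu}$ replaced by $A_\nu$, attains the bound $E_f(A_\nu)$ and has mean $\beta\nu_1^* + (1-\beta)\nu_2^* = A_\nu \leq A_\nu$; it is therefore feasible and, by the upper bound, optimal. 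When condition (\ref{Eqn:Corollary1_Condition}) holds at $\overline{\nu} = A_\nu$, Corollary~\ref{Theorem:Corollary1} gives $E_f(A_\nu) = f(A_\nu)$ with the two points coalescing, so the single-mass-point pdf $p_\nu^*(\nu) = \delta(\nu - A_\nu)$ is optimal, which recovers the first case of the statement. The only remaining care is to confirm feasibility in this degenerate case and to record, consistent with the non-uniqueness flagged in the statement, that the constructed pdf is \emph{a} solution rather than the unique one.
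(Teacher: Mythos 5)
Your proof is correct, but it reaches the conclusion by a different route than the paper. The paper's argument (Appendix C) is essentially two lines: it first replaces the inequality constraint $\mathbb{E}_\nu\{\nu\} \leq A_\nu$ in (\ref{Eqn:GeneralOptimizationProblem}) by the equality $\mathbb{E}_\nu\{\nu\} = A_\nu$, justifying this by the monotonicity of the objective with a citation to the monotonic-optimization literature, and then applies Corollary~\ref{Theorem:Corollary1} or Lemma~\ref{Theorem:Lemma1} directly at $\overline{\nu} = A_\nu$. You instead keep the inequality constraint, observe that Lemma~\ref{Theorem:Lemma1} is tight for every admissible mean so that the problem collapses to $\max_{\overline{\nu}\leq A_\nu} E_f(\overline{\nu})$, and then prove that $E_f$ --- identified with the upper concave envelope of $f$ --- is non-decreasing, so the maximum sits at $\overline{\nu}=A_\nu$. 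This buys you a self-contained justification of the step the paper delegates to a reference (that the power constraint is active at the optimum), at the cost of needing the envelope identification: strictly, you should note that $E_f(\overline{\nu})$ is simultaneously the value of a chord of $f$ (hence at most the envelope) and the value of an affine function dominating $f$ (hence at least the envelope), which pins it to the concave hull and gives the concavity your monotonicity argument relies on; your $-\infty$ contradiction then goes through since a non-decreasing $f$ is bounded below on any half-line. Both approaches correctly verify the mean of the constructed two-point pdf equals $A_\nu$ and hence feasibility, and both correctly treat the degenerate case where (\ref{Eqn:Corollary1_Condition}) holds and the two mass points coalesce at $A_\nu$.
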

\begin{proof}
	Please refer to Appendix \ref{Appendix:Corollary2Proof}.
\end{proof}
\footnotetext{We note that, similar to (\ref{Eqn:WPT_GeneralProblem}), problem (\ref{Eqn:GeneralOptimizationProblem}) may have an infinite number of solutions, i.e., for a given monotonic non-decreasing function $f(\cdot)$, there may exist multiple pdfs $p_{\nu}(\nu)$ that yield the same value of $\mathbb{E}_\nu \{f(\nu)\}$ and satisfy the constraint in (\ref{Eqn:GeneralOptimizationProblem}).
	In Corollary~\ref{Theorem:Corollary2}, we obtain one solution of (\ref{Eqn:GeneralOptimizationProblem}).}

The results in Lemma~\ref{Theorem:Lemma1}, Corollary~\ref{Theorem:Corollary1}, and Corollary~\ref{Theorem:Corollary2} will be exploited in Section~\ref{Section:OptimalTransmitStrategies} for solving the transmit strategy optimization problem in (\ref{Eqn:WPT_GeneralProblem}).

		\section{Optimal Transmit Strategies}	
		\label{Section:OptimalTransmitStrategies}	
		In this section, we first consider MISO and SIMO WPT systems, where the EH and the TX node are equipped with a single antenna, respectively. 
		For each system architecture, we determine the pdf $p^*_{\boldsymbol{x}}(\boldsymbol{x})$ as solution of (\ref{Eqn:WPT_GeneralProblem}) and the resulting optimal transmit strategy.
		Then, we consider the general multi-user MIMO WPT case and present optimal and suboptimal solutions of (\ref{Eqn:WPT_GeneralProblem}).
			\subsection{MISO WPT Systems}
			\label{Section:MISOSystem}
			In the following, we consider MISO WPT systems with a single-antenna EH node, i.e., $M = N^\text{E}_1 = 1$, and a TX equipped with $N^\text{T} \geq 1$ antennas.
In this case, the weighted sum in (\ref{Eqn:EhUtilityFunction}) reduces to $\overline{\Phi}(p_{\boldsymbol{x}}) = \mathbb{E}_{\boldsymbol{x}} \big\{ \phi(|\boldsymbol{g} \boldsymbol{x}|^2) \big\}$, where $\boldsymbol{g}$
is the row-vector representing the channel between the TX and the EH node.
In the following proposition, we provide a solution of optimization problem (\ref{Eqn:WPT_GeneralProblem}) for MISO WPT systems and the corresponding optimal transmit strategy.

\begin{proposition}
	For MISO WPT systems, function $\overline{\Phi}(\cdot)$ is maximized by transmit vectors $\boldsymbol{x} = \boldsymbol{w} s$, where $\boldsymbol{w} = \frac{\boldsymbol{g}^H}{ \norm{\boldsymbol{g}} }$ is the MRT beamforming vector and $s = r_s \exp(j \theta_s)$ is a scalar random symbol with arbitrary phase $\theta_s$ and amplitude $r_s$ following distribution $p^*_{r_s} (r_s)$.
	Furthermore, for the optimal transmit strategy, the pdf of the symbol amplitudes $r_s$ is given by $p^*_{r_s}(r_s) = \delta(r_s - \sqrt{P_x})$ if the following inequality holds:
	\begin{equation}
		{\Phi}'(P_x) \big(P_x - r_s^2\big) \leq {\Phi}({P_x}) - {\Phi}(r_s^2), \; \forall r_s \in \mathbb{R}_{+},
		\label{Eqn:Proposition1Condition}
	\end{equation}
	\noindent where ${\Phi}(r^2) = \phi \big({r^2}{ \norm{\boldsymbol{g}}^2 }\big)$.
	If (\ref{Eqn:Proposition1Condition}) does not hold, the optimal transmit strategy is characterized by the pdf $p^*_{r_s}(r_s) = (1-\beta) \delta(r_s - \sqrt{\nu^*_1}) + \beta \delta(r_s - \sqrt{\nu^*_2})$, where $\beta = \frac{\nu^*_2 - P_x}{\nu^*_2 - \nu^*_1}$.
	Here, $\nu^*_1$ and $\nu^*_2$ are the solutions of the following min-max optimization problem:
	\begin{equation}
		\nu_1^* = \argmin_{\nu_1 \leq P_x } \gamma(\nu_1; \Phi)		
		\label{Eqn:SolutionNu1}
	\end{equation}
	\noindent with $\gamma(\nu_1; \Phi) = \max_{\nu_2 \geq P_x} \sigma(\nu_1, \nu_2; \Phi)$ and $\sigma(\nu_1, \nu_2; \Phi) = \frac{{\Phi}({\nu_2}) - {\Phi}({\nu_1})}{\nu_2 - \nu_1}$ and	
	\begin{equation}
	\nu_2^* = \argmax_{\nu_2 \geq P_x} \sigma (\nu_1^*, \nu_2; \Phi),
	\label{Eqn:SolutionNu2}
	\end{equation}
	\noindent respectively.
	\label{Theorem:Proposition1}
\end{proposition}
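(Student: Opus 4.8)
The plan is to split the proof into two structural reductions followed by a direct appeal to Corollary~\ref{Theorem:Corollary2}. First I would show that restricting to MRT beamforming incurs no loss of optimality. For any realization of the transmit vector $\boldsymbol{x}$, the Cauchy--Schwarz inequality gives $|\boldsymbol{g}\boldsymbol{x}|^2 \leq \norm{\boldsymbol{g}}^2 \norm{\boldsymbol{x}}^2$, with equality precisely when $\boldsymbol{x}$ is aligned with $\boldsymbol{g}^H$. Given any feasible pdf $p_{\boldsymbol{x}}$, I would map each realization $\boldsymbol{x}$ to the MRT vector $\tilde{\boldsymbol{x}} = \boldsymbol{w}\,\norm{\boldsymbol{x}}\,e^{j\theta}$ of equal norm, where $\boldsymbol{w} = \boldsymbol{g}^H/\norm{\boldsymbol{g}}$ and $\theta$ is arbitrary. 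This map preserves $\norm{\tilde{\boldsymbol{x}}}^2 = \norm{\boldsymbol{x}}^2$ pointwise, so constraint~(\ref{Eqn:WPT_GeneralProblem_C1}) continues to hold, while $|\boldsymbol{g}\tilde{\boldsymbol{x}}|^2 = \norm{\boldsymbol{g}}^2 \norm{\boldsymbol{x}}^2 \geq |\boldsymbol{g}\boldsymbol{x}|^2$. Since $\phi(\cdot)$ is non-decreasing by Assumption~\ref{Assumption:IncreasingBounded}, the objective cannot decrease under this map, so an optimal strategy can always be taken of the form $\boldsymbol{x} = \boldsymbol{w}s$ with $s = r_s e^{j\theta_s}$.

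Next I would reduce the problem to a scalar one. By Assumption~\ref{Assumption:Narrowband} (equivalently the invariance noted in Remark~1), the harvested power depends on $s$ only through its magnitude $r_s$, so $\theta_s$ may be chosen arbitrarily. Substituting $\boldsymbol{x} = \boldsymbol{w}s$ gives $|\boldsymbol{g}\boldsymbol{x}|^2 = \norm{\boldsymbol{g}}^2 r_s^2$ and $\norm{\boldsymbol{x}}^2 = r_s^2$, so with $\Phi(r^2) = \phi(r^2\norm{\boldsymbol{g}}^2)$ the objective becomes $\mathbb{E}\{\Phi(r_s^2)\}$ and the power budget becomes $\mathbb{E}\{r_s^2\} \leq P_x$. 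Setting $\nu = r_s^2$, this is exactly the auxiliary problem~(\ref{Eqn:GeneralOptimizationProblem}) with $f = \Phi$ and $A_\nu = P_x$, where $\Phi$ inherits the monotonicity of $\phi$.

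Finally, I would invoke Corollary~\ref{Theorem:Corollary2} directly. When condition~(\ref{Eqn:Proposition1Condition}) holds---which is precisely property~(\ref{Eqn:Corollary1_Condition}) specialized to $f = \Phi$ and $\overline{\nu} = P_x$---the corollary yields the single-mass-point solution $p_\nu^*(\nu) = \delta(\nu - P_x)$, i.e.\ $p_{r_s}^*(r_s) = \delta(r_s - \sqrt{P_x})$. Otherwise it yields the two-mass-point solution supported on $\nu_1^*,\nu_2^*$ determined by the min--max problem~(\ref{Eqn:SolutionNu1})--(\ref{Eqn:SolutionNu2}); since $r_s = \sqrt{\nu}$ is a deterministic bijection on $\mathbb{R}_{+}$, the mass points map to $\sqrt{\nu_1^*}$ and $\sqrt{\nu_2^*}$, with weights fixed by the mean-matching condition $\mathbb{E}\{r_s^2\} = P_x$, giving the stated $p_{r_s}^*(r_s)$.

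I expect the only proposition-specific content to lie in the MRT reduction of the first step; the remainder is a change of variable plus a verbatim application of Corollary~\ref{Theorem:Corollary2}. The points requiring care are that the domination argument must be carried out at the level of the full pdf rather than a single vector, and that the substitution $\nu = r_s^2$ restricts the auxiliary problem to $\nu \geq 0$, which is consistent since $\Phi$ is defined only for non-negative arguments.
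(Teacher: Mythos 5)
Your proposal is correct and follows essentially the same route as the paper's proof: a pointwise MRT domination argument (the paper phrases it as relocating each mass point to the aligned vector of equal norm, which is exactly your Cauchy--Schwarz step), followed by the change of variable $\nu = r_s^2$ and a verbatim application of Corollary~\ref{Theorem:Corollary2}. The only discrepancy is cosmetic: your mean-matching condition places weight $\frac{\nu_2^* - P_x}{\nu_2^* - \nu_1^*}$ on the mass point at $\sqrt{\nu_1^*}$, consistent with Corollary~\ref{Theorem:Corollary2}, so the roles of $\beta$ and $1-\beta$ in the proposition statement (and in the paper's own appendix) appear to be swapped --- a typo in the paper rather than a gap in your argument.
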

\begin{proof}
	Please refer to Appendix~\ref{Appendix:Prop1Proof}.
\end{proof}

Proposition~\ref{Theorem:Proposition1} shows that, as for linear EHs in \cite{Zhang2013}, for the considered non-linear EH model, MRT beamforming is optimal.
Furthermore, similar to the SISO case in \cite{Morsi2019}, for MISO WPT systems, there exists an optimal input symbol amplitude that follows a discrete pdf, $p_{r_s}^*(r_s)$, consisting of at most two mass points.
In particular, it is optimal to adopt a single sinusoidal signal $s$ with amplitude $r_s = \sqrt{P_x}$ and an arbitrary phase $\theta_s$ if condition (\ref{Eqn:Proposition1Condition}) holds.
If (\ref{Eqn:Proposition1Condition}) does not hold, amplitude $r_s$ is a discrete binary random variable.
In this case, in order to obtain the pdf $p^*_{r_s}(r_s)$, the non-convex min-max optimization problem defined by (\ref{Eqn:SolutionNu1}), (\ref{Eqn:SolutionNu2}) has to be solved. 
Due to the low dimensionality of the problem, we propose to obtain the optimal solution via a two-dimensional grid search \cite{Coope2001}.

				\subsubsection{Grid Search Method}
				\label{Section:GridSearch}
				In the following, we propose a grid-search based method for solving the min-max optimization problem in Proposition~\ref{Theorem:Proposition1}.
We note that this problem is not convex since function $\sigma(\nu_1, \nu_2; \Phi)$ is not convex and not concave in $\nu_1$ and $\nu_2$, respectively. 
However, since the dimensionality of the problem is low, performing a grid search to determine $\nu_1^*$ and $\nu_2^*$ entails limited and affordable complexity \cite{Coope2001}.
To this end, we define a uniform grid $\mathcal{P} = \{\rho_0, \rho_1, \rho_2, \dots, \rho_{N_\rho}\}$, where $\rho_0 = 0$, $\rho_j = \Delta_\rho + \rho_{j-1}$, $j = 1,2,\dots,{N_\rho}$, $N_\rho$ is the grid size, and $\Delta_\rho$ is a predefined step size.
Then, we define the smallest element of $\mathcal{P}$ which is larger than $P_x$ as $\rho_n$, i.e., $\rho_n = \min\{\rho_j | \rho_j \geq P_x, j = 0,1,\dots,N_\rho\}$.
Next, we define a matrix $\boldsymbol{S} \in \mathbb{R}^{n\times (N_\rho-n+1)}$, whose elements are the values of function $\sigma(\cdot, \cdot; \Phi)$ evaluated at the elements of $\mathcal{P}$, i.e., ${S}_{i,j} = \sigma(\rho_i, \rho_{j'}; \Phi)$, $i = 0,1,\dots,{n-1}$, $j = {j'}-n$, and ${j'} = n,n+1,\dots,N_\rho$.
Finally, we obtain the power values $\nu_1^* = \rho_{i^*}$ and $\nu_2^*=\rho_{n+j^*}$, where $i^* = \argmin_{i} \max_{j} {S}_{i,j}$ and $j^* = \argmax_{j} {S}_{i^*,j}$, respectively.
The proposed grid-search method is summarized in Algorithm~\ref{AlgorithmGridSearch}.
The computational complexity of the proposed scheme is quadratic with respect to the grid size $N_\rho$ and does not depend on the number of antennas.
\begin{algorithm}[!t]		
	\small				
	\linespread{1.45}\selectfont
	\SetAlgoNoLine%
	\SetKwFor{Foreach}{for each}{do}{end}		
	Initialize: Grid size $N_{\rho}$, step size $\Delta_{\rho}$, maximum TX power $P_x$, initial value $\rho_0=0$.	\\	
	1. Compute the grid $\mathcal{P}$ and the values of $\Phi(\cdot)$ for the grid elements:\\
	\For{$m = 0$ {\upshape to} $N_\rho$}{
		1.1. Compute $\Phi_{m}$ = $\Phi(\rho_{m})$\\
		1.2. Set $\rho_{m+1} = \rho_{m} + \Delta_{\rho}$ \\
	}
	2. Determine grid element $\rho_n = \min\{\rho_j | \rho_j \geq P_x, j = 0,1,\dots,N_\rho \}$ \\
	3. Calculate the elements of matrix $\boldsymbol{S}$ as ${S}_{i,j} = \sigma(\rho_i, \rho_{j'}; \Phi) = \frac{\Phi_{j'} - \Phi_i}{\rho_{j'} - \rho_i}$, $i = 0,1,\dots,{n-1}$, $j = {j'}-n$, and ${j'} = n,n+1,\dots,N_\rho$ \\
	4. Determine power values $\nu_1^* = \rho_{i^*}$ and $\nu_2^*=\rho_{n+j^*}$, where $i^* = \argmin_{i} \max_{j} {S}_{i,j}$ and $j^* = \argmax_{j} {S}_{i^*,j}$ \\
	\textbf{Output:} Optimal values $\nu_1^*$, $\nu_2^*$, $\beta = \frac{\nu_2^* - P_x}{\nu_2^* - \nu_1^*}$
	\caption{\strut Grid search for determining the optimal values $\nu_1^*$, $\nu_2^*$, and $\beta$ }
	\label{AlgorithmGridSearch}
\end{algorithm}	 
				\subsubsection{Special Case}
				\label{Section:MISOSpecialCase}
				In the following, we consider a special case of MISO WPT systems, where the EH model satisfies the following additional assumption.
\begin{assumption}
	For the EH model $\phi(\cdot)$, there is a value $A_s$, such that $\forall z: |z| \leq A_s$, $\phi(|z|^2)$ is a convex function and $\forall z: |z| > A_s$, $\phi(|z|^2) = \phi(A_s^2)$.
	\label{Assumption:ConvexityBoundness}
\end{assumption}
In particular, we note that Assumption~\ref{Assumption:ConvexityBoundness} holds for the EH model in (\ref{Eqn:RaniaModel}).
In the following corollary, we show that for an EH model satisfying Assumption~\ref{Assumption:ConvexityBoundness}, the result in Proposition~\ref{Theorem:Proposition1} can be significantly simplified.
\begin{corollary}
	For an EH model satisfying Assumption~\ref{Assumption:ConvexityBoundness}, the pdf $p_{r_s}^*(r_s)$ of the transmit symbol amplitudes $r_s$ in Proposition~\ref{Theorem:Proposition1} is given by $p^*_{r_s}(r_s) = \delta(r_s - \sqrt{P_x^\text{\normalfont max}})$ if $P_x \geq P_x^\text{\normalfont max} = \frac{A_s^2}{\norm{\boldsymbol{g}}^2}$ and $p^*_{r_s}(r_s) = (1-\beta) \delta(r_s) + \beta \delta(r_s - \sqrt{P_x^\text{\normalfont max}})$, where $\beta = \frac{P_x^\text{\normalfont max} - P_x}{P_x^\text{\normalfont max}}$, otherwise.
	\label{Theorem:Corollary3}
\end{corollary}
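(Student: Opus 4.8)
The plan is to read off everything from Proposition~\ref{Theorem:Proposition1} after translating Assumption~\ref{Assumption:ConvexityBoundness} into properties of the composite function $\Phi(\nu)=\phi(\nu\norm{\boldsymbol{g}}^2)$ that appears there, where I write $\nu=r_s^2$ and $P_x^\text{max}=A_s^2/\norm{\boldsymbol{g}}^2$. Since convexity is preserved under the linear inner map $\nu\mapsto\nu\norm{\boldsymbol{g}}^2$ and the saturation threshold $|z|^2=A_s^2$ corresponds to $\nu=P_x^\text{max}$, Assumption~\ref{Assumption:ConvexityBoundness} yields that $\Phi(\nu)$ is non-decreasing and convex on $[0,P_x^\text{max}]$ and constant, equal to its global maximum $\Phi(P_x^\text{max})=\phi(A_s^2)$, for all $\nu\geq P_x^\text{max}$. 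With these two structural facts in hand, I would split into the two regimes $P_x\geq P_x^\text{max}$ and $P_x<P_x^\text{max}$.

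For $P_x\geq P_x^\text{max}$ I would argue directly by extremality rather than invoking the min-max machinery. Because $\phi(\cdot)$ is non-decreasing and saturates, $\Phi(\nu)\leq\phi(A_s^2)$ for every $\nu\geq0$, with equality for all $\nu\geq P_x^\text{max}$; hence no admissible pdf can exceed the average harvested power $\phi(A_s^2)$. The deterministic amplitude $r_s=\sqrt{P_x^\text{max}}$ attains this bound while consuming average power $\mathbb{E}\{r_s^2\}=P_x^\text{max}\leq P_x$, so it satisfies (\ref{Eqn:WPT_GeneralProblem_C1}) and is therefore optimal, giving $p_{r_s}^*(r_s)=\delta(r_s-\sqrt{P_x^\text{max}})$. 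This is consistent with Proposition~\ref{Theorem:Proposition1}: in the saturated region $\Phi'(P_x)=0$, so condition (\ref{Eqn:Proposition1Condition}) collapses to $0\leq\Phi(P_x)-\Phi(r_s^2)$, which is just the global maximality of $\Phi(P_x)$, and the single-mass-point branch applies; placing the mass at the minimal saturating amplitude $\sqrt{P_x^\text{max}}$ rather than at $\sqrt{P_x}$ is permitted by the non-uniqueness discussed in the Remark of Section~\ref{Section:ProblemFormulation}, since every amplitude in $[\sqrt{P_x^\text{max}},\sqrt{P_x}]$ saturates the rectifier and is equally optimal.

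For $P_x<P_x^\text{max}$ I would first verify that condition (\ref{Eqn:Proposition1Condition}) fails, so that the two-mass-point branch applies: as $P_x$ now lies in the convex region, the tangent of $\Phi$ at $P_x$ lies \emph{below} $\Phi$ there, which is the exact reverse of (\ref{Eqn:Proposition1Condition}) (strictly, except in the degenerate affine case, which leads to the same endpoints). I would then solve the min-max problem (\ref{Eqn:SolutionNu1})--(\ref{Eqn:SolutionNu2}) using the monotonicity of the divided difference $\sigma(\nu_1,\nu_2;\Phi)$ of a convex function in each of its arguments. For fixed $\nu_1\leq P_x$, $\sigma(\nu_1,\nu_2;\Phi)$ is non-decreasing in $\nu_2$ on the convex part $[P_x,P_x^\text{max}]$ and strictly decreasing for $\nu_2>P_x^\text{max}$ (numerator saturates while the denominator grows), so the inner maximum is attained at $\nu_2=P_x^\text{max}$ and $\gamma(\nu_1;\Phi)=[\phi(A_s^2)-\Phi(\nu_1)]/(P_x^\text{max}-\nu_1)$. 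By the same monotonicity in the left endpoint, $\gamma(\nu_1;\Phi)$ is non-decreasing in $\nu_1$, hence minimized over $[0,P_x]$ at $\nu_1^*=0$, and substituting back gives $\nu_2^*=P_x^\text{max}$. Then $\beta=(\nu_2^*-P_x)/(\nu_2^*-\nu_1^*)=(P_x^\text{max}-P_x)/P_x^\text{max}$, and Proposition~\ref{Theorem:Proposition1} yields the stated $p_{r_s}^*(r_s)=(1-\beta)\delta(r_s)+\beta\,\delta(r_s-\sqrt{P_x^\text{max}})$.

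I expect the min-max step for $P_x<P_x^\text{max}$ to be the main obstacle, since it requires combining the two regimes of $\Phi$ --- convex and then flat --- to pin the inner maximizer exactly at the saturation knee $P_x^\text{max}$ and to push the outer minimizer to $\nu_1=0$. Both conclusions rest on the standard fact that, for a convex function, the divided difference $[\Phi(\nu_2)-\Phi(\nu_1)]/(\nu_2-\nu_1)$ is monotone in each argument; the only additional care needed is at the non-smooth knee $\nu=P_x^\text{max}$, where the one-sided derivatives of $\Phi$ differ, and at the boundary $P_x=P_x^\text{max}$ between the two cases, which I would treat by continuity.
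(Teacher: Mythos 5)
Your proposal is correct and follows essentially the same route as the paper's own proof: for $P_x \geq P_x^{\text{max}}$ it uses the saturation bound $\Phi(\nu) \leq \Phi(P_x^{\text{max}}) = \phi(A_s^2)$ to place a single mass point at $\sqrt{P_x^{\text{max}}}$, and for $P_x < P_x^{\text{max}}$ it uses the convexity of $\Phi$ on $[0, P_x^{\text{max}}]$ together with the flat saturated tail to conclude $\nu_1^* = 0$ and $\nu_2^* = P_x^{\text{max}}$. The only difference is one of detail: you explicitly verify that condition (\ref{Eqn:Proposition1Condition}) fails and carry out the min-max via the monotonicity of the divided difference of a convex function, steps the paper compresses into the single phrase ``due to the convexity of $\Phi(\nu)$.''
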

\begin{proof}
	First, we note that $\forall \nu \in \mathbb{R}_{+}$, $\Phi(\nu) \leq \Phi({P_x^\text{max}}) = \phi(A_s^2)$.
	Therefore, for $P_x \geq P_x^{\text {\normalfont max}}$, the optimal transmit strategy is characterized by the pdf $p^*_{r_s}(r_s) = \delta(r_s - \sqrt{P_x^\text{\normalfont max}})$ .
	Furthermore, functions $\phi(|z|^2)$ in (\ref{Eqn:RaniaModel}) and ${\Phi}(\nu)$ in Proposition~\ref{Theorem:Proposition1} are convex and increasing in the intervals $|z|^2 \in [0, A_s^2]$ and $\nu \in [0, {P_x^\text{max}}]$, respectively. 
	Hence, $\forall P_x: P_x < P_x^\text{max}$, due to the convexity of $\Phi(\nu)$, the solutions of the optimization problems in Proposition~\ref{Theorem:Proposition1} are given by $\nu_1^* = 0$ and $\nu_2^* = P_x^\text{max}$, respectively, with $\beta = \frac{P_x^\text{\normalfont max} - P_x}{P_x^\text{\normalfont max}}$. 
	This concludes the proof.
\end{proof}

Corollary~\ref{Theorem:Corollary3} reveals that if $P_x < P_x^\text{max}$,  ON-OFF signaling with MRT beamforming is optimal, which is similar to the result obtained for the SISO WPT systems in \cite{Morsi2019}.
Furthermore, for $P_x \geq P_x^\text{max}$, it is affordable to drive the EH node into saturation and, hence, the optimal pdf of ${r_s}$ consists of a single mass point. 
We note that in contrast to Proposition~\ref{Theorem:Proposition1}, where all the power budget $P_x$ is utilized for the transmission, Corollary~\ref{Theorem:Corollary3} shows that for EH models satisfying Assumption~\ref{Assumption:ConvexityBoundness} and $P_x \geq P_x^\text{max}$, there is an optimal transmit strategy, where the average transmit power is equal to $P_x^\text{max}$.

			\subsection{SIMO WPT Systems}
			\label{Section:SIMOSystem}
			In the following, we consider a SIMO WPT system, where $M$ EH nodes are equipped with $N^\text{E}_m \geq 1$, $m \in \{1,2,\dots,M\}$, antennas and the TX has a single antenna, i.e., $N^\text{T} = 1$.
In this case, as in \cite{Shanin2020, Morsi2019}, due to Assumption~\ref{Assumption:Narrowband}, the powers harvested at the rectennas depend on the magnitude of the scalar transmit symbol but not on its phase.
Hence, the weighted sum in (\ref{Eqn:EhUtilityFunction}) can be expressed as a function of the pdf $p_{r_x}$ of the transmit symbol amplitude $r_x = |x|$ as follows $\overline{\Phi}(p_{r_x}) = \mathbb{E}_{r_x} \big\{ \Phi(r_x^2) \big\}$, where 
\begin{equation}
	\Phi(r_x^2) = \sum_{m=1}^{M}  \sum_{p=1}^{N^\text{E}_m} \xi_m \phi (r_x^2 |g_p^m|^2)\label{Eqn:SIMOPowerFunction}
\end{equation} 
\noindent and $|g_p^m|$ is the magnitude of the scalar channel coefficient $g_p^m$ between the transmit antenna and antenna $p$ of EH node $m$.
In the following proposition, we provide a solution of optimization problem (\ref{Eqn:WPT_GeneralProblem}) for SIMO WPT systems and the corresponding optimal transmit strategy.

\begin{proposition}
	For the considered SIMO WPT system, function $\overline{\Phi}(\cdot)$ is maximized for discrete transmit symbol amplitudes following distribution $p_{r_x}^*(r_x)$.
	In particular, the optimal transmit strategy is characterized by the pdf $p^*_{r_x}(r_x) = \delta(r_x - \sqrt{P_x})$, if function $\Phi(r_x^2)$ is differentiable at $r_x^2 = P_x$ and the following inequality holds:
	\begin{equation}
		{\Phi}'(P_x) \big(P_x - r_x^2\big) \leq {\Phi}({P_x}) - {\Phi}(r_x^2), \; \forall r_x \in \mathbb{R}_{+}.
	\label{Eqn:Proposition2Condition}
	\end{equation}
	Furthermore, if (\ref{Eqn:Proposition2Condition}) does not hold, the optimal pdf is given by $p^*_{r_x}(r_x) = \beta \delta(r_x - \sqrt{\nu^*_1}) + (1-\beta) \delta(r_x - \sqrt{\nu^*_2})$, where $\beta = \frac{\nu^*_2 - P_x}{\nu^*_2 - \nu^*_1}$.
	Here, $\nu^*_1$ and $\nu^*_2$ are the corresponding solutions of the optimization problems in (\ref{Eqn:SolutionNu1}) and (\ref{Eqn:SolutionNu2}), respectively, where function $\Phi(\cdot)$ is given by (\ref{Eqn:SIMOPowerFunction}).
	\label{Theorem:Proposition2}
\end{proposition}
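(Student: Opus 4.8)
The plan is to show that, for SIMO WPT systems, optimization problem (\ref{Eqn:WPT_GeneralProblem}) reduces to the auxiliary problem (\ref{Eqn:GeneralOptimizationProblem}), after which the result follows directly from Corollary~\ref{Theorem:Corollary1} and Corollary~\ref{Theorem:Corollary2}. First I would exploit the single-antenna TX: since $N^\text{T}=1$, the transmit symbol $x$ is a scalar complex random variable, which I write in polar form as $x = r_x \exp(j\theta_x)$ with $r_x = |x| \geq 0$. By Assumption~\ref{Assumption:Narrowband}, each harvested power $\phi(|g_p^m x|^2) = \phi(r_x^2 |g_p^m|^2)$ depends only on $r_x$ and not on $\theta_x$, and hence so does the objective $\overline{\Phi}$ in (\ref{Eqn:EhUtilityFunction}), which collapses to $\overline{\Phi}(p_{r_x}) = \mathbb{E}_{r_x}\{\Phi(r_x^2)\}$ with $\Phi(\cdot)$ as in (\ref{Eqn:SIMOPowerFunction}). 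The power-budget constraint (\ref{Eqn:WPT_GeneralProblem_C1}) likewise becomes $\mathbb{E}_{r_x}\{r_x^2\} \leq P_x$. Thus the problem depends on $p_{\boldsymbol{x}}$ only through the amplitude distribution $p_{r_x}$, so it suffices to optimize over $p_{r_x}$.

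Next I would apply the change of variables $\nu = r_x^2$, under which the objective becomes $\mathbb{E}_\nu\{\Phi(\nu)\}$ and the constraint becomes $\mathbb{E}_\nu\{\nu\} \leq P_x$. This is exactly the auxiliary problem (\ref{Eqn:GeneralOptimizationProblem}) with $f = \Phi$ and $A_\nu = P_x$. The hypothesis required to invoke Corollary~\ref{Theorem:Corollary2} is that $\Phi(\nu)$ be non-decreasing, which I would verify as follows: by Assumption~\ref{Assumption:IncreasingBounded}, $\phi(\cdot)$ is non-decreasing, so each map $\nu \mapsto \phi(\nu |g_p^m|^2)$ is non-decreasing in $\nu$ since $|g_p^m|^2 \geq 0$; as the weights satisfy $\xi_m \geq 0$, the function $\Phi$ is a non-negative combination of non-decreasing functions and is therefore itself non-decreasing.

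With monotonicity in hand, I would invoke Corollary~\ref{Theorem:Corollary2} directly. If $\Phi$ is differentiable at $\nu = P_x$ and the supporting-line condition (\ref{Eqn:Corollary1_Condition}) holds—which, written in terms of the argument $r_x^2$, is precisely (\ref{Eqn:Proposition2Condition})—then Corollary~\ref{Theorem:Corollary1} gives the single-mass-point optimum $p_\nu^*(\nu) = \delta(\nu - P_x)$. Otherwise, Corollary~\ref{Theorem:Corollary2} yields $p_\nu^*(\nu) = \beta\delta(\nu - \nu_1^*) + (1-\beta)\delta(\nu - \nu_2^*)$, with $\nu_1^*, \nu_2^*$ defined exactly by the min-max problems (\ref{Eqn:SolutionNu1}) and (\ref{Eqn:SolutionNu2}) for $f = \Phi$, and $\beta = \frac{\nu_2^* - P_x}{\nu_2^* - \nu_1^*}$.

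Finally I would translate back through $r_x = \sqrt{\nu}$. Because $\nu = r_x^2$ is a bijection between $r_x \geq 0$ and $\nu \geq 0$, a point mass of $p_\nu^*$ at $\nu = c$ maps to a point mass of $p_{r_x}^*$ at $r_x = \sqrt{c}$, giving $p_{r_x}^*(r_x) = \delta(r_x - \sqrt{P_x})$ in the differentiable case and $p_{r_x}^*(r_x) = \beta\delta(r_x - \sqrt{\nu_1^*}) + (1-\beta)\delta(r_x - \sqrt{\nu_2^*})$ otherwise, as claimed; an arbitrary phase distribution may then be appended to recover an optimal complex symbol $x$, consistent with the phase-invariance remark. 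The main work lies entirely in the reduction rather than in any delicate estimate: the only points requiring care are justifying that the phase can be marginalized out by Assumption~\ref{Assumption:Narrowband} and that the map $\nu = r_x^2$ correctly carries the constraint $\mathbb{E}\{r_x^2\}\le P_x$ onto $\mathbb{E}\{\nu\}\le P_x$. Once this equivalence is established, Proposition~\ref{Theorem:Proposition2} is an immediate consequence of the general machinery of (\ref{Eqn:GeneralOptimizationProblem}) and Corollaries~\ref{Theorem:Corollary1}--\ref{Theorem:Corollary2}.
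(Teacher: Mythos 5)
Your proposal is correct and follows essentially the same route as the paper's proof in Appendix~F: reduce the objective and power constraint to functions of $\nu = r_x^2$ alone (using Assumption~\ref{Assumption:Narrowband} for phase invariance and monotonicity of $\Phi$ as a non-negative combination of non-decreasing functions), recognize the result as an instance of the auxiliary problem (\ref{Eqn:GeneralOptimizationProblem}), and apply Corollary~\ref{Theorem:Corollary2}. Your write-up is, if anything, slightly more explicit than the paper about the phase marginalization and the change of variables back to $r_x$.
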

\begin{proof}
	Please refer to Appendix~\ref{Appendix:Prop2Proof}.
\end{proof}

Proposition~\ref{Theorem:Proposition2} reveals that there exists an optimal pdf $p^*_{r_x}(r_x)$ of the symbol amplitudes $r_x$ that is discrete and consists of one or two mass points.
In particular, as for SISO and MISO WPT systems in \cite{Grover2010} and in Section \ref{Section:MISOSystem}, respectively, it is optimal to transmit a single sinusoid if condition (\ref{Eqn:Proposition2Condition}) holds.
If (\ref{Eqn:Proposition2Condition}) does not hold, this optimal pdf consists of two mass points, $\nu^*_1$ and $\nu^*_2$, which are obtained as solutions of the min-max optimization problem (\ref{Eqn:SolutionNu1}), (\ref{Eqn:SolutionNu2}).
Due to its low dimensionality, this problem also can be efficiently solved via a two-dimensional grid search, as discussed in Section~\ref{Section:GridSearch} and summarized in Algorithm~\ref{AlgorithmGridSearch}.

			{\itshape Special case:} In the following, we consider a special case of the SIMO WPT system, where the EH model satisfies the following additional assumption.
\begin{assumption}
	For function $\phi(\cdot)$, Assumption~\ref{Assumption:ConvexityBoundness} is satisfied.
	Furthermore, for the value $A_s$, $\forall z: |z| \leq A_s$, the following inequality holds: $\phi(|z|^2) \geq \phi(A_s^2) \Big(\frac{|z|^2}{A_s^2}\Big)^2$.
	\label{Assumption:Quadratic}
\end{assumption}
We note that the condition in Assumption~\ref{Assumption:Quadratic} implies that the power harvested by the rectifier grows slower than quadratically with the input power.
In particular, it can be shown that Assumption~\ref{Assumption:Quadratic} is satisfied for the EH model in (\ref{Eqn:RaniaModel}).
Let us now consider a SIMO WPT system with two rectifiers, i.e.,  $M = 1$ and $N^\text{E}_1 = 2$ or $M= 2$ and $N^\text{E}_1 =N^\text{E}_2 = 1$.
In this case, without loss of generality, we denote the scalar channel coefficients between the transmit antenna and the antennas of the rectifiers by $g_1$ and $g_2$ and assume that $|g_1| \geq |g_2|$.
Then, in the following corollary, we provide a closed-form solution for the optimal $\nu_1^*$ and $\nu_2^*$ in Proposition~\ref{Theorem:Proposition2}.
\begin{corollary}
	Let us consider a SIMO WPT system with two rectifiers and an EH model satisfying Assumption~\ref{Assumption:Quadratic}.
	In this case, if $P_x < \rho_{\text{\upshape{min}}} = \frac{A_s^2}{|g_1|^2}$, the optimal transmit strategy is characterized by the pdf of the transmit symbol amplitudes given by $p^*_{r_x}(r_x) = \beta \delta(r_x) + (1-\beta) \delta(r_x - \sqrt{\rho_\text{\upshape{min}}} )$, where $\beta = \frac{{\rho_\text{\upshape{min}}} - P_x}{\rho_\text{\upshape{min}}}$.
	Furthermore, if $P_x \in [\rho_{\text{\upshape{min}}}, \rho_{\text{\upshape{max}}}) $, where $\rho_{\text{\upshape{max}}} = \frac{A_s^2}{|g_2|^2}$, then $p^*_{r_x}(r_x) = \beta \delta(r_x - \sqrt{\rho_\text{\upshape{min}}} ) + (1-\beta) \delta(r_x - \sqrt{\rho_\text{\upshape{max}}} )$ with $\beta = \frac{\rho_\text{\upshape{max}} - P_x}{\rho_\text{\upshape{max}} - \rho_\text{\upshape{min}}}$.
	Finally, if $P_x \geq \rho_{\text{\upshape{max}}}$, the pdf is given by $p^*_{r_x}(r_x) = \delta(r_x - \sqrt{\rho_\text{\upshape{max}}})$.
	\label{Theorem:Corollary4}
\end{corollary}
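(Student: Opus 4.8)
The plan is to specialize the objective $\Phi$ in (\ref{Eqn:SIMOPowerFunction}) to the two-rectifier case and then read off the optimal mass points from the min-max construction underlying Proposition~\ref{Theorem:Proposition2}. Writing $\Phi(\nu)=\phi(\nu|g_1|^2)+\phi(\nu|g_2|^2)$ (in the two-user case the terms carry the positive weights $\xi_1,\xi_2$, which only rescale what follows), Assumption~\ref{Assumption:ConvexityBoundness} immediately fixes the shape of $\Phi$: on $[0,\rho_\text{min}]$ both arguments satisfy $\nu|g_p|^2\le A_s^2$, so $\Phi$ is convex and strictly increasing; on $[\rho_\text{min},\rho_\text{max}]$ the first term is saturated at $\phi(A_s^2)$ while the second is still convex increasing, so $\Phi$ is again convex increasing; and for $\nu\ge\rho_\text{max}$ both terms are saturated, so $\Phi$ is constant and equal to its global maximum. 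The only non-smoothness is a downward jump of the slope at $\rho_\text{min}$, where the first term stops growing.

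Next I would recall from the discussion following Corollary~\ref{Theorem:Corollary1} that the two points $\nu_1^*\le P_x\le\nu_2^*$ returned by (\ref{Eqn:SolutionNu1})--(\ref{Eqn:SolutionNu2}) are the contact points of the tightest line $g(\nu)\ge\Phi(\nu)$ passing through $P_x$, i.e. they trace out the least concave majorant of $\Phi$. Hence the whole corollary reduces to identifying this majorant. I would claim it is the piecewise-linear function with breakpoints exactly at $0,\rho_\text{min},\rho_\text{max}$, which requires verifying that the chord slopes $s_1=\sigma(0,\rho_\text{min};\Phi)$ and $s_2=\sigma(\rho_\text{min},\rho_\text{max};\Phi)$ obey $s_1\ge s_2\ge 0$. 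Combined with the piecewise convexity established above, $s_1\ge s_2$ guarantees that the two chords lie above $\Phi$ on their respective intervals and that the majorant is concave, so that $\rho_\text{min}$ is a genuine exposed breakpoint; $s_2\ge 0$, which is immediate from the monotonicity of $\phi$, handles the flat part for $\nu\ge\rho_\text{max}$.

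The main obstacle is precisely the slope inequality $s_1\ge s_2$, and this is where Assumption~\ref{Assumption:Quadratic} is indispensable: without a lower bound on $\phi$ the majorant could cut the corner at $\rho_\text{min}$ and run straight from $0$ to $\rho_\text{max}$, which would change the optimal mass points. Substituting $\rho_\text{min}=A_s^2/|g_1|^2$, $\rho_\text{max}=A_s^2/|g_2|^2$ and $t=|g_2|^2/|g_1|^2\in(0,1]$, the inequality $s_1\ge s_2$ reduces after clearing denominators to an elementary relation between $\phi(A_s^2 t)$ and $\phi(A_s^2)$; applying the bound $\phi(A_s^2 t)\ge\phi(A_s^2)t^2$ from Assumption~\ref{Assumption:Quadratic} collapses it to $(1-t)^2\ge 0$, which always holds. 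This is the only quantitative step, and I would carry it out carefully since the correctness of the stated mass points hinges on it.

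Finally, I would read off the three cases according to which segment of the majorant contains $P_x$. For $P_x<\rho_\text{min}$ the relevant segment has endpoints $0$ and $\rho_\text{min}$, giving $\nu_1^*=0$, $\nu_2^*=\rho_\text{min}$ and, via $\beta=(\nu_2^*-P_x)/(\nu_2^*-\nu_1^*)$, the stated weight $\beta=(\rho_\text{min}-P_x)/\rho_\text{min}$. For $P_x\in[\rho_\text{min},\rho_\text{max})$ the endpoints are $\rho_\text{min}$ and $\rho_\text{max}$, yielding the claimed two-point pdf with $\beta=(\rho_\text{max}-P_x)/(\rho_\text{max}-\rho_\text{min})$. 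For $P_x\ge\rho_\text{max}$ I would argue directly rather than through the min-max: since $\Phi(\nu)\le\Phi(\rho_\text{max})$ for all $\nu$, the single mass point $p^*_{r_x}(r_x)=\delta(r_x-\sqrt{\rho_\text{max}})$ attains the global maximum of $\Phi$ while satisfying $\mathbb{E}\{r_x^2\}=\rho_\text{max}\le P_x$, hence it is optimal (and uses the minimum transmit power, exactly as in Corollary~\ref{Theorem:Corollary3}), completing the proof.
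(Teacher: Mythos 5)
Your proof is correct and takes essentially the same route as the paper's: both exploit the piecewise convexity of $\Phi$ on $[0,\rho_{\text{min}})$, $[\rho_{\text{min}},\rho_{\text{max}})$, $[\rho_{\text{max}},\infty)$ together with the quadratic lower bound of Assumption~\ref{Assumption:Quadratic} to pin the mass points to $0$, $\rho_{\text{min}}$, $\rho_{\text{max}}$, and your chord inequality $\sigma(0,\rho_{\text{min}};\Phi)\geq\sigma(\rho_{\text{min}},\rho_{\text{max}};\Phi)$ (reduced to $(1-t)^2\geq 0$) is equivalent, via a mediant argument, to the paper's $\Phi(\rho_{\text{min}})/\rho_{\text{min}}\geq\Phi(\rho_{\text{max}})/\rho_{\text{max}}$. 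The only slip is your parenthetical that unequal weights $\xi_1,\xi_2$ ``only rescale what follows'' --- for $\xi_2>1/(1+t)$ the slope inequality can actually fail --- but the paper's own proof likewise treats the unweighted sum, so this does not affect the result as stated.
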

\begin{proof}
	Please refer to Appendix~\ref{Appendix:Corollary4Proof}.
\end{proof}

Thus, for SIMO WPT systems with two rectifiers, if the transmit power budget is low, i.e., $P_x < \rho_{\text{min}}$, similar to the SISO and MISO WPT cases, ON-OFF signaling is optimal, where the ON signal drives the rectifier with the best channel conditions into saturation.
Furthermore, for $P_x \in [\rho_{\text{\upshape{min}}}, \rho_{\text{\upshape{max}}})$, the pdf $p^*_{r_x}(r_x)$ has two mass points, which are chosen to drive one and both rectifiers into saturation, respectively.
For $P_x \geq \rho_{\text{\upshape{max}}}$, it is affordable to drive both rectifiers into saturation and, hence, the optimal pdf consists of a single mass point.
Moreover, as for MISO WPT systems, Corollary~\ref{Theorem:Corollary4} reveals that for $P_x \geq \rho_{\text{\upshape{max}}}$, the average transmit power of the optimal transmit strategy is equal to $\rho_{\text{\upshape{max}}}$.

			\subsection{MIMO WPT Systems}
			\label{Section:MIMOSystem}
			In the following, we consider the general multi-user MIMO WPT system in Fig.~\ref{Fig:SystemModel}, where $N^\text{T} \geq 1$ and $N^\text{E}_m \geq 1$ antennas are employed at the TX and EH node $m$, $m \in \{1,2,\dots,M\}$, respectively.
In the following proposition, we characterize a solution of (\ref{Eqn:WPT_GeneralProblem}) and the corresponding optimal transmit strategy.

\begin{proposition}
	For multi-user MIMO WPT systems, function $\overline{\Phi} (\cdot)$ is maximized for discrete random transmit symbol vectors $\boldsymbol{x} = \boldsymbol{w} s$, where $s = \exp(j \theta_s)$ is a unit-norm symbol with an arbitrary phase $\theta_s$.
	Here, $\boldsymbol{w}$ is a discrete random beamforming vector, whose pdf is given by $p^*_{\boldsymbol{w}}(\boldsymbol{w}) = \beta \delta(\boldsymbol{w} - \boldsymbol{w}^*_1) + (1-\beta) \delta(\boldsymbol{w} - \boldsymbol{w}^*_2)$.
	The beamforming vectors $\boldsymbol{w}^*_n, n \in \{1,2\},$ are given by
	\begin{align}
	\boldsymbol{w}^*_n &\in \{ \boldsymbol{w} : \Psi(\boldsymbol{w}) = \Phi(\nu_n^*) \},
	\label{Eqn:MimoPropositionBeamformerProblem} \\
	\Phi(\nu) &= \max_{ \{ \boldsymbol{w} \, | \, \boldsymbol{w} \in \mathbb{C}^{N^\text{\upshape T}}\hspace*{-3pt}, \; \norm{\boldsymbol{w}}^2 = \nu \} } \Psi(\boldsymbol{w}),
	\label{Eqn:MimoPropositionFunction}
	\end{align}
	where $\Psi(\boldsymbol{x}) = \sum_m \xi_m \psi_m(\boldsymbol{x}).$ 
	Here, $\nu^*_1$ and $\nu^*_2$ are the corresponding solutions of the optimization problems in (\ref{Eqn:SolutionNu1}) and (\ref{Eqn:SolutionNu2}), respectively, where $\Phi(\cdot)$ is given by (\ref{Eqn:MimoPropositionFunction}).
	Furthermore, if the following inequality holds:
	\begin{align}
	\Phi'(P_x)(P_x - \nu) \leq \Phi(P_x) - \Phi(\nu), \forall \nu \in \mathbb{R}_{+},
	\label{Eqn:MimoPropositionInequality} 
	\end{align}
	the optimal points $\nu_1^*$ and $\nu_2^*$ coincide and the optimal pdf is given by $p^*_{\boldsymbol{w}}(\boldsymbol{w}) = \delta(\boldsymbol{w} - \boldsymbol{w}^*)$, where $\boldsymbol{w}^* = \boldsymbol{w}^*_1 = \boldsymbol{w}^*_2$.
	\label{Theorem:Proposition3}
\end{proposition}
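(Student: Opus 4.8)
The plan is to collapse the vector-valued problem (\ref{Eqn:WPT_GeneralProblem}) onto the scalar auxiliary problem (\ref{Eqn:GeneralOptimizationProblem}) already solved in Corollary~\ref{Theorem:Corollary2}, by decoupling the \emph{direction} of the transmit vector from its \emph{squared norm}. First I would observe that for any realization of $\boldsymbol{x}$ with $\norm{\boldsymbol{x}}^2 = \nu$, the harvested objective $\Psi(\boldsymbol{x}) = \sum_m \xi_m \psi_m(\boldsymbol{x})$ is at most $\Phi(\nu)$ as defined in (\ref{Eqn:MimoPropositionFunction}), with equality exactly when $\boldsymbol{x}$ is aligned with a maximizer $\boldsymbol{w}$ of the inner direction problem. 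Taking expectations yields $\overline{\Phi}(p_{\boldsymbol{x}}) = \mathbb{E}_{\boldsymbol{x}}\{\Psi(\boldsymbol{x})\} \leq \mathbb{E}_{\nu}\{\Phi(\nu)\}$, while the power constraint (\ref{Eqn:WPT_GeneralProblem_C1}) is precisely $\mathbb{E}_\nu\{\nu\} \leq P_x$ with $\nu = \norm{\boldsymbol{x}}^2$. Hence an optimal $p_{\boldsymbol{x}}^*$ is obtained by first optimizing the pdf of the scalar $\nu$ and then, for each mass point of $\nu$, steering the transmit vector into a direction that attains $\Phi(\nu)$.

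Second, I would verify the hypotheses needed to invoke Corollary~\ref{Theorem:Corollary2} with $f(\cdot) = \Phi(\cdot)$ and $A_\nu = P_x$. The function $\Phi(\nu)$ must be non-decreasing: given $\nu' > \nu$ and a maximizer $\boldsymbol{w}$ with $\norm{\boldsymbol{w}}^2 = \nu$, the scaled vector $\sqrt{\nu'/\nu}\,\boldsymbol{w}$ is feasible for the $\nu'$-problem and, by the monotonicity of $\phi(\cdot)$ in Assumption~\ref{Assumption:IncreasingBounded}, does not decrease any term $\phi(|\boldsymbol{g}_p^m\boldsymbol{w}|^2)$, so $\Phi(\nu') \geq \Phi(\nu)$. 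I would also note that the supremum defining $\Phi(\nu)$ in (\ref{Eqn:MimoPropositionFunction}) is attained, since it is taken over the compact sphere $\{\boldsymbol{w} : \norm{\boldsymbol{w}}^2 = \nu\}$ and $\Psi$ is continuous (as for the models of interest, e.g., (\ref{Eqn:RaniaModel})); this guarantees that the set in (\ref{Eqn:MimoPropositionBeamformerProblem}) is nonempty so that a beamforming vector $\boldsymbol{w}_n^*$ can actually be selected.

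Third, with these properties in hand, Corollary~\ref{Theorem:Corollary2} directly supplies the optimal distribution of $\nu$: a single mass point $\delta(\nu - P_x)$ when condition (\ref{Eqn:Corollary1_Condition}), rewritten here as (\ref{Eqn:MimoPropositionInequality}) for $f = \Phi$, holds, and otherwise the two-mass-point pdf $\beta\,\delta(\nu - \nu_1^*) + (1-\beta)\,\delta(\nu - \nu_2^*)$ with $\nu_1^*, \nu_2^*$ from the min-max problem (\ref{Eqn:SolutionNu1}), (\ref{Eqn:SolutionNu2}) and $\beta = \frac{\nu_2^* - P_x}{\nu_2^* - \nu_1^*}$. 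Associating each mass point $\nu_n^*$ with a maximizing direction $\boldsymbol{w}_n^*$ satisfying $\norm{\boldsymbol{w}_n^*}^2 = \nu_n^*$ and collecting the corresponding probabilities then produces $p_{\boldsymbol{w}}^*(\boldsymbol{w}) = \beta\,\delta(\boldsymbol{w} - \boldsymbol{w}_1^*) + (1-\beta)\,\delta(\boldsymbol{w} - \boldsymbol{w}_2^*)$, which degenerates to a single point when the two mass points coincide. Finally, to recover the stated product form $\boldsymbol{x} = \boldsymbol{w}s$ with $s = \exp(j\theta_s)$, I would invoke the phase-invariance argument in the Remark following (\ref{Eqn:WPT_GeneralProblem}): since by Assumption~\ref{Assumption:Narrowband} $\Psi$ depends only on the magnitudes $|\boldsymbol{g}_p^m\boldsymbol{x}|$, multiplying $\boldsymbol{w}$ by any unit-norm $s$ leaves both $\Psi$ and $\norm{\boldsymbol{x}}^2$ unchanged, so the arbitrary-phase scalar symbol can be factored out.

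The hard part will be the decoupling/achievability step rather than any computation: one must argue carefully both that $\mathbb{E}\{\Psi(\boldsymbol{x})\}$ cannot exceed $\mathbb{E}\{\Phi(\nu)\}$ for \emph{any} admissible $p_{\boldsymbol{x}}$ and that this bound is achievable by a pdf supported on the chosen directions, so that the reduction to (\ref{Eqn:GeneralOptimizationProblem}) is exact. I note that the inner maximization (\ref{Eqn:MimoPropositionFunction}) is itself non-convex, but its explicit solution is not required for this structural result; only the existence of a maximizer is needed, and the actual computation of $\boldsymbol{w}_n^*$ is deferred to the monotonic-optimization and SDR/SCA algorithms developed later.
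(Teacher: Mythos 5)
Your proposal is correct and follows essentially the same route as the paper's proof: decouple the direction of $\boldsymbol{x}$ from its squared norm $\nu=\norm{\boldsymbol{x}}^2$ via the inner maximization defining $\Phi(\nu)$, reduce (\ref{Eqn:WPT_GeneralProblem}) to the scalar auxiliary problem (\ref{Eqn:GeneralOptimizationProblem}) with $f=\Phi$, and invoke Corollary~\ref{Theorem:Corollary2} to obtain the (at most) two mass points and the associated beamforming vectors. Your explicit verification that $\Phi(\cdot)$ is non-decreasing (by scaling a maximizer) and that the inner maximum is attained is a welcome addition of rigor that the paper states only implicitly, but it does not change the argument.
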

\begin{proof}
	Please refer to Appendix~\ref{Appendix:Prop3Proof}.
\end{proof}

Proposition~\ref{Theorem:Proposition3} reveals that there is an optimal transmit vector $\boldsymbol{x}$ which is discrete and characterized by scalar unit-norm symbols $s$ with an arbitrary phase\footnotemark\hspace*{0pt} and at most two beamforming vectors, $\boldsymbol{w}_1^*$ and $\boldsymbol{w}_2^*$.\hspace*{-5pt} 
\footnotetext{We note that the phase $\theta_s$ of scalar symbol $s$ can be chosen arbitrarily in each time slot $n$. This degree of freedom can be further exploited, for example, for information transmission \cite{Shanin2020}.}
As in Corollary~\ref{Theorem:Corollary1}, these beamforming vectors coincide if inequality (\ref{Eqn:MimoPropositionInequality}) holds.
We note that $\boldsymbol{w}_1^*$ and $\boldsymbol{w}_2^*$ are characterized by the values $\nu_1^*$ and $\nu_2^*$ obtained as solutions of the non-convex problems (\ref{Eqn:SolutionNu1}) and (\ref{Eqn:SolutionNu2}), respectively, that also can be solved using the grid-search method in Algorithm~\ref{AlgorithmGridSearch}.
However, unlike for MISO and SIMO systems, in order to obtain the value of $\Phi(\cdot)$ for a given transmit power value $\nu$, the maximum value of $\Psi(\cdot)$ as solution of (\ref{Eqn:MimoPropositionFunction}) is required.
We note that (\ref{Eqn:MimoPropositionFunction}) is a non-convex problem and, hence, obtaining its optimal solution is, in general, NP-hard.
However, since problem (\ref{Eqn:MimoPropositionFunction}) belongs to the class of monotonic optimization problems, in Section~\ref{Section:OptimalSolution}, we first obtain the optimal solution exploiting the polyblock outer optimization approach \cite{Zhang2012}.
Then, in Section~\ref{Section:SuboptimalSolution}, for practical EH models satisfying Assumption~\ref{Assumption:ConvexityBoundness}, we propose an iterative low-complexity algorithm to obtain a suboptimal solution of the problem.
			
			\subsubsection{Optimal Solution}
			\label{Section:OptimalSolution}
			\begin{figure*}[!t]
	\centering
	\subfigure[]{
		\includegraphics[width=0.31\textwidth, draft=false]{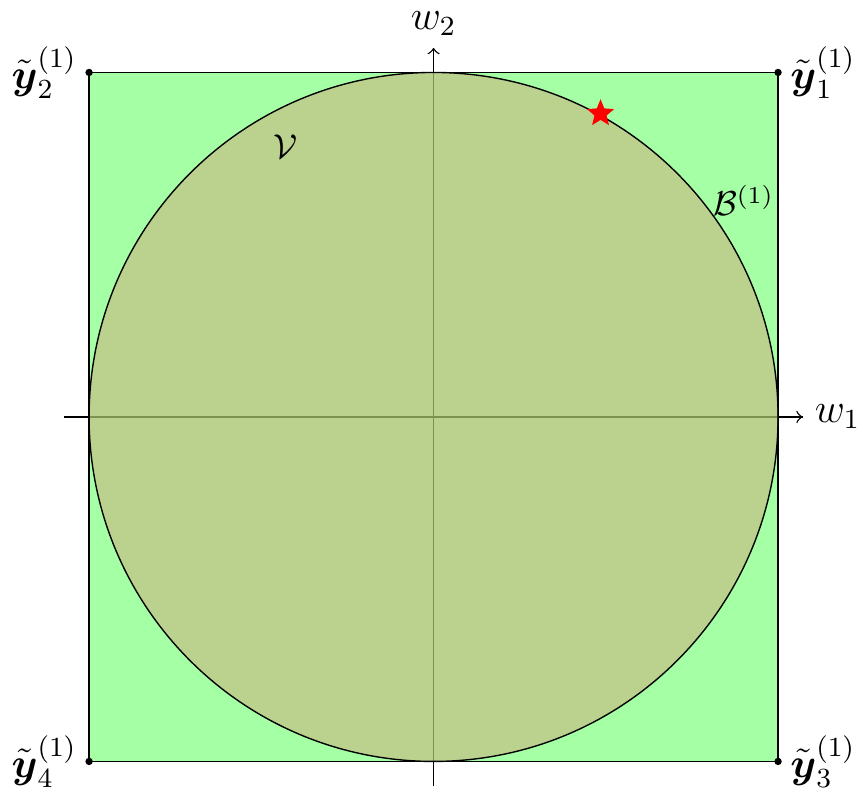}
		\label{Fig:MonotonicOptimizationA}
	}
	\subfigure[]{
		\includegraphics[width=0.31\textwidth, draft=false]{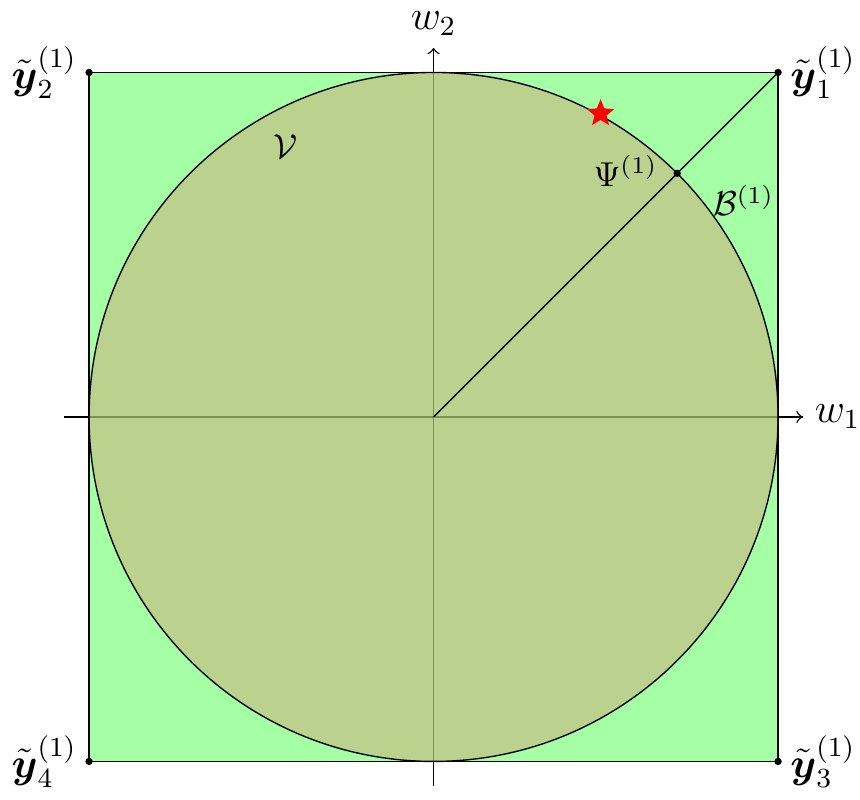}
		\label{Fig:MonotonicOptimizationB}
	}\subfigure[]{
		\includegraphics[width=0.31\textwidth, draft=false]{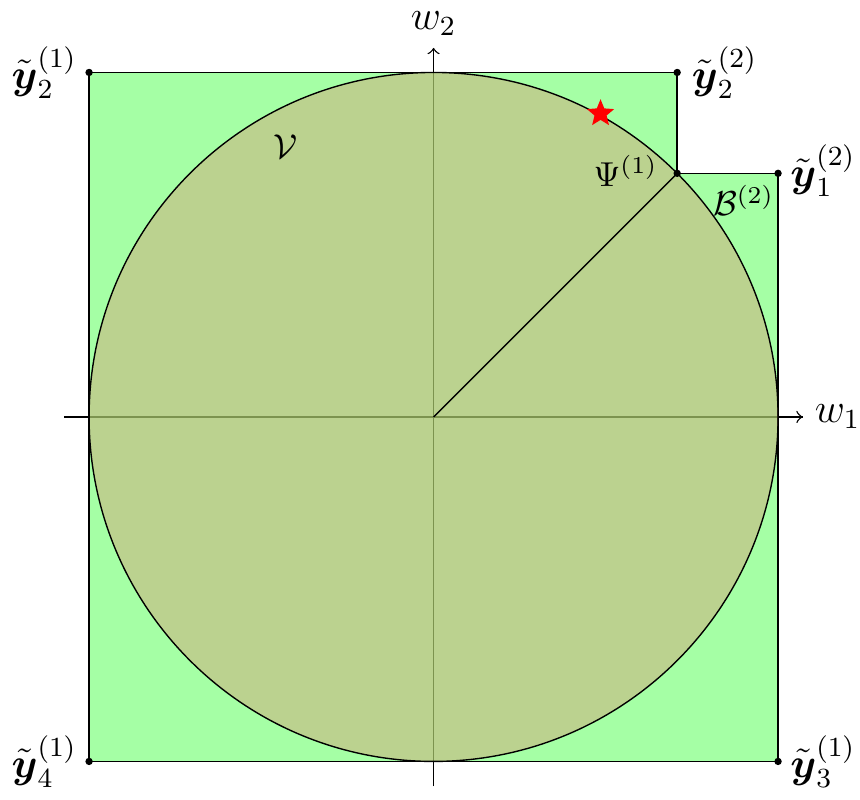}
		\label{Fig:MonotonicOptimizationC}
	}
	\caption{Illustration of the polyblock optimization approach. The red star is the optimal point on the feasible set $\mathcal{V}$.}
	\label{Fig:MonotonicOptimization}
\end{figure*}
In the following, we obtain the optimal solution of non-convex problem (\ref{Eqn:MimoPropositionFunction}) exploiting monotonic optimization \cite{Zhang2012}.
To this end, similar to the monotonic polyblock optimization framework in \cite{Zhang2012}, \cite{Ghanem2020}, we obtain the optimal solution of (\ref{Eqn:MimoPropositionFunction}) by exploring the feasible set $\mathcal{V}$ of (\ref{Eqn:MimoPropositionFunction}) determined by constraint $\norm{\boldsymbol{w}}^2 = \nu$.
First, we enclose $\mathcal{V}$ by constructing an initial polyblock $\mathcal{B}^{(1)}$ with an initial set of vertices $\mathcal{L}^{(1)} = \big\{\tilde{\boldsymbol{y}}_d^{(1)} \vert d \in \{1,2,\dots, 2^{2N^\text{T}} \} \big\}$, where $\tilde{\boldsymbol{y}}_d^{(1)} \in \mathbb{C}^{N^\text{T}}$ is a vector, whose $k^\text{th}$ element is defined as $\tilde{y}_{d,k}^{(1)} = (-1)^{a_{n-1}^d} \nu + j (-1)^{a_{n}^d} \nu$.
Here, $n = 2^k$ and $a_{p}^d \in \{0,1\}$ denotes bit $p$ in the binary representation of number $d$, see Fig.~\ref{Fig:MonotonicOptimizationA}.
Then, since the objective function in (\ref{Eqn:MimoPropositionFunction}) is monotonically increasing in $|\boldsymbol{w}|$, i.e., in $|{w}_1|, |{w}_2|, \dots, |{w}_{N^T}|$, in iteration $m \geq 1$ of the proposed algorithm, as shown in Fig.~\ref{Fig:MonotonicOptimizationB}, we choose a vertex $\boldsymbol{y}^{(m)}$ from the set of vertices $\mathcal{L}^{(m)}$ that maximizes the objective function, i.e.,  $\boldsymbol{y}^{(m)} = \argmax_{\boldsymbol{y} \in \mathcal{L}^{(m)} } \Psi(\boldsymbol{y})$.
We calculate the intersection point $\psi^{(m)}$ between the feasible set $\mathcal{V}$ and the line that connects the origin and vertex $\boldsymbol{y}^{(m)}$ as $\boldsymbol{\chi}^{(m)} = \sqrt{\nu} \frac{\boldsymbol{y}^{(m)}}{\norm{\boldsymbol{y}^{(m)}}} $.
Then, based on vertex set $\mathcal{L}^{(m)}$, we generate a set of new vertices $\mathcal{L}_\mathcal{B}^{(m+1)} = \{\tilde{\boldsymbol{y}}^{(m+1)}_1, \tilde{\boldsymbol{y}}^{(m+1)}_2, \dots, \tilde{\boldsymbol{y}}^{(m+1)}_{N^\text{T}}\}$ and, thus, construct the new polybock $\mathcal{B}^{(m+1)}$ with vertex set $\mathcal{L}^{(m+1)} = \mathcal{L}^{(m)} \cup \mathcal{L}_\mathcal{B}^{(m+1)} \setminus \{ \boldsymbol{y}^{(m)}\}$, see Fig.~\ref{Fig:MonotonicOptimizationC}.
This procedure is continued until the feasible set $\mathcal{V}$ is enclosed by the final polyblock $\mathcal{B}^{(1)} \supset \mathcal{B}^{(2)} \supset \dots \supset \mathcal{V}$.
Finally, as solution of (\ref{Eqn:MimoPropositionFunction}), we select the vertex $\boldsymbol{\chi}^{*}$ that maximizes the objective function $\Psi(\boldsymbol{w})$.
The proposed algorithm is summarized in Algorithm~\ref{OptimalAlgorithm}.

\begin{algorithm}[!t]	
	\small		
	\linespread{1.45}\selectfont		
	\SetAlgoNoLine%
	\SetKwFor{Foreach}{for each}{do}{end}		
	Initialize: Polyblock $\mathcal{B}^{(1)}$ with a vertex set $\mathcal{L}^{(1)}$, iteration index $m = 1$, vertex $\boldsymbol{y}^{(1)}$, and error tolerance $\epsilon_{\text{PA}}$.	\\	
	\Repeat{$\frac{\norm{\boldsymbol{y}^{(m)} - \boldsymbol{\chi}^{(m)}}}{\norm{\boldsymbol{y}^{(m)}}}\leq \epsilon_{\normalfont{\text{PA}}}$ }{		
		1. Construct a smaller polyblock $\mathcal{B}^{(m+1)}$ with vertex set $\mathcal{L}^{(m+1)}$ by replacing $\boldsymbol{y}^{(m)}$ with $N^\text{T}$ new vertices $\mathcal{L}_{\mathcal{B}}^{(m+1)}=\{\tilde{\boldsymbol{y}}_1^{(m+1)}, \tilde{\boldsymbol{y}}_2^{(m+1)}, \dots, \tilde{\boldsymbol{y}}_{N^\text{T}}^{(m+1)}\}$. The new vertex $\tilde{\boldsymbol{y}}_k^{(m+1)}$, $k \in \{1,2,\dots,N^\text{T}\},$ is given by 
		$$ \tilde{\boldsymbol{y}}_k^{(m+1)} = \boldsymbol{y}^{(m)} - (y_k^{(m)} - x_k^{(m)}) \boldsymbol{u}_k,$$ where $x_k^{(m)}$ is the $k^\text{th}$ element of $\boldsymbol{\chi}^{(m)} = \sqrt{\nu} \frac{\boldsymbol{y}^{(m)}}{\norm{\boldsymbol{y}^{(m)}}}$ and $\boldsymbol{u}_k$ is a unit vector containing only one non-zero element at position $k$. 
		\\
		2. Find $\boldsymbol{y}^{(m+1)} \in \mathcal{L}^{(m+1)}$ as the vertex that maximizes ${\Psi}(\boldsymbol{y})$, i.e., $$\boldsymbol{y}^{(m+1)} = \argmax_{\boldsymbol{y} \in \mathcal{L}^{(m+1)}} \{{\Psi}(\boldsymbol{y})\} $$ \\									
		3. Set $m = m+1$\\ 
	}	
	\textbf{Output:} 
	$\boldsymbol{w}^* = \argmax_{\boldsymbol{y} \in \{\boldsymbol{\chi}^{(1)}, \boldsymbol{\chi}^{(2)},\dots,\boldsymbol{\chi}^{(m-1)} \}} \Psi(\boldsymbol{y})$, $ \Phi(\nu) = \Psi(\boldsymbol{w}^*)$
	\caption{\strut Polyblock outer approximation algorithm}
	\label{OptimalAlgorithm}
\end{algorithm}

We note that the computational complexity of Algorithm~\ref{OptimalAlgorithm} increases exponentially with the number of antennas $N^\text{T}$ employed at the TX.
Therefore, obtaining the optimal value of function $\Phi(\nu)$ may not be feasible in practical multi-user MIMO WPT systems. 
Nevertheless, the obtained optimal solution provides a performance upper-bound for any suboptimal scheme. 
In the next section, we propose an iterative low-complexity algorithm to obtain a suboptimal solution of (\ref{Eqn:MimoPropositionFunction}).

			\subsubsection{Suboptimal Solution}
			\label{Section:SuboptimalSolution}
			In the following, we consider a practical EH model $\phi(\cdot)$ that satisfies Assumption~\ref{Assumption:ConvexityBoundness}.
For this model, we propose an iterative low-complexity algorithm based on \gls*{sdr} and \gls*{sca} to determine a suboptimal solution of (\ref{Eqn:MimoPropositionFunction}).
To this end, we first define matrix $\boldsymbol{W} = \boldsymbol{w} \boldsymbol{w}^H$ and reformulate problem (\ref{Eqn:MimoPropositionFunction}) equivalently as follows:
\begin{subequations}
	\begin{align}
	\maximize_{\boldsymbol{W} \in \mathcal{S}_{+}} \quad &\hat{\Psi}(\boldsymbol{W})
	\label{Eqn:MimoSuboptimalFunctionRef_Obj} \\
	\subjectto \quad &\Tr{\boldsymbol{W}} \leq \nu, \label{Eqn:MimoSuboptimalFunctionRef_C1} \\
	&\rank \{ \boldsymbol{W} \} = 1, \label{Eqn:MimoSuboptimalFunctionRef_C2}
	\end{align}	
	\label{Eqn:MimoSuboptimalFunctionRef}
\end{subequations}
\noindent \hspace*{-5pt}where $\hat{\Psi}(\boldsymbol{W}) = \sum_{m = 1}^{M} \sum_{p = 1}^{N^\text{E}_m} \xi_m \phi(\boldsymbol{g}^m_p \boldsymbol{W} {\boldsymbol{g}^m_p}^H) $ and $\mathcal{S}_{+}$ denotes the set of positive semidefinite matrices.
Since the objective function in (\ref{Eqn:MimoPropositionFunction}) is monotonic non-decreasing in $|\boldsymbol{w}| = \{|\boldsymbol{w}_1| \, |\boldsymbol{w}_2|  \dots  |\boldsymbol{w}_{N^\text{T}}|\}^\top$, we relax the equality constraint in (\ref{Eqn:MimoPropositionFunction}) by inequality constraint (\ref{Eqn:MimoSuboptimalFunctionRef_C1}) \cite{Zhang2012}.

Optimization problem (\ref{Eqn:MimoSuboptimalFunctionRef}) is non-convex due to the non-concavity of objective function (\ref{Eqn:MimoSuboptimalFunctionRef_Obj}) and the non-convexity of constraint (\ref{Eqn:MimoSuboptimalFunctionRef_C2}). 
Therefore, in order to obtain a suboptimal solution of (\ref{Eqn:MimoSuboptimalFunctionRef}), we first eliminate constraint (\ref{Eqn:MimoSuboptimalFunctionRef_C2}).
Then, we denote the total number of rectennas at the EH nodes by $K = \sum_{m=1}^M N^\text{E}_m$ and define the sets $\mathcal{W}_k, k\in\{0,1,\dots\hspace*{-2pt},K\},$ of matrices $\boldsymbol{W}$ such that $\forall \boldsymbol{W} \in \mathcal{W}_k$ exactly $k$ rectifiers are driven into saturation.
We note that $\mathcal{W}_1 \cup \mathcal{W}_2 \cup \dots \cup \mathcal{W}_K = \mathcal{S}_{+}$.
Furthermore, rectifier $p$ of EH node $m$ is driven into saturation if and only if $\boldsymbol{g}^{m}_{p} \boldsymbol{W} {\boldsymbol{g}^{m}_{p}}^H \geq A_s^2$.
Hence, set $\mathcal{W}_k, k\in\{0,1,\dots\hspace*{-2pt},K\},$ consists of $T_k = \frac{K!}{k!(K-k)!}$ convex subsets, where $k!$ denotes the factorial of $k$, i.e., $\mathcal{W}_k = \mathcal{W}_k^1 \cup \mathcal{W}_k^2 \cup \dots \cup \mathcal{W}^{T_k}_k$.
Each convex subset $\mathcal{W}^t_k, t\in\{1,2,\dots\hspace*{-2pt},T_k\},$ consists of all matrices $\boldsymbol{W}$ which drive into saturation a specific combination of $k$ rectennas.
We note that the objective function in (\ref{Eqn:MimoSuboptimalFunctionRef}) is convex for each of these subsets and, hence, applying SCA for solving (\ref{Eqn:MimoSuboptimalFunctionRef}) for $\boldsymbol{W} \in \mathcal{W}^t_k$, is promising \cite{Sun2017, Lanckriet2009}.
Thus, the solution of (\ref{Eqn:MimoSuboptimalFunctionRef}) can be obtained by exploring the subsets $\mathcal{W}^t_k, t\in\{1,2,\dots\hspace*{-2pt},T_k\}, k\in\{0,1,\dots\hspace*{-2pt},K\},$ and solving the resulting problem for each subset \cite{Shanin2021a}.
However, since the computational complexity of this exploration grows with $K!$, in the following, we obtain a suboptimal solution of (\ref{Eqn:MimoSuboptimalFunctionRef}).

For a given transmit power limit $\nu$, we first determine a set of rectennas $\mathcal{W}^*$, which will be driven into saturation.
To this end, we sort the channel gain vectors $\boldsymbol{g}_p^m$ in descending order of their norms as follows $\norm{\boldsymbol{g}_{p_1}^{m_1}} \geq \norm{\boldsymbol{g}_{p_2}^{m_2}} \geq \dots \geq \norm{\boldsymbol{g}_{p_{K}}^{m_{K}}}$, where $m_k \in \{1,2,\dots\hspace*{-2pt},M\}$, $p_k \in \{1,2,\dots\hspace*{-2pt},N_{m_k}^\text{E}\}$, and $k = 1,2,\dots\hspace*{-2pt},K$.
Then, we check if it is possible to drive the $k$ rectifiers with the best channel conditions, i.e., rectifier $p_1$ of EH node $m_1$, rectifier $p_2$ of EH node $m_2$, $\dots$, rectifier $p_k$ of EH node $m_k$, into saturation by solving the following optimization problem:
\begin{subequations}
	\begin{align}
	\maximize_{\boldsymbol{W} \in \mathcal{S}_{+}} \quad & 1 \\
	\subjectto \quad &\boldsymbol{g}^{m_n}_{p_n} \boldsymbol{W} {\boldsymbol{g}^{m_n}_{p_n}}^H \geq A_s^2, \; \forall n \in \{1,2,\dots\hspace*{-2pt},k\}, \label{Eqn:MimoSuboptimalFeasibility_C2a} \\
	 &\boldsymbol{g}^{m_{\tilde{n}}}_{p_{\tilde{n}}} \boldsymbol{W} {\boldsymbol{g}^{m_{\tilde{n}}}_{p_{\tilde{n}}}}^H < A_s^2, \; \forall {\tilde{n}} \in \{k+1,k+2,\dots\hspace*{-2pt},K\}, \label{Eqn:MimoSuboptimalFeasibility_C2b} \\
	& \Tr{\boldsymbol{W}} \leq \nu. \label{Eqn:MimoSuboptimalFeasibility_C2c}
	\end{align}	
	\label{Eqn:MimoSuboptimalFeasibility}
\end{subequations}
\noindent\hspace*{-5pt}Optimization problem (\ref{Eqn:MimoSuboptimalFeasibility}) is convex and can be solved with standard numerical optimization tools, such as CVX \cite{Grant2015}. 
Furthermore, although we dropped the rank-one constraint (\ref{Eqn:MimoSuboptimalFunctionRef_C2}), it can be shown that if (\ref{Eqn:MimoSuboptimalFeasibility}) is feasible and $k > 0$, a beamforming matrix $\boldsymbol{W}^*_k$, which solves (\ref{Eqn:MimoSuboptimalFeasibility}), has rank one. 
A corresponding proof is provided in the conference version \cite[Appendix B]{Shanin2021b} but is omitted here due to the space constraints.
We denote by $k^*$ the maximum number of rectifiers $k$, for which problem (\ref{Eqn:MimoSuboptimalFeasibility}) is feasible.
Note that if (\ref{Eqn:MimoSuboptimalFeasibility}) is not feasible for any $k>0$, we have $k^* = 0$.
Then, we define the convex subset $\mathcal{W}^*$ that corresponds to the case, where the $k^*$ rectifiers with the best channel conditions are driven into saturation.
This set is given by 
\begin{align}
	\mathcal{W}^* = \big\{\boldsymbol{W}: \boldsymbol{W}\in\mathcal{S}_{+}, \boldsymbol{g}^{m_n}_{p_n} \boldsymbol{W} {\boldsymbol{g}^{m_n}_{p_n}}^H &\geq A_s^2, \forall {n} \in \{1,2,\dots\hspace*{-2pt},k^*\}, \nonumber \\ \boldsymbol{g}^{m_{\tilde{n}}}_{p_{\tilde{n}}} \boldsymbol{W} {\boldsymbol{g}^{m_{\tilde{n}}}_{p_{\tilde{n}}}}^H &< A_s^2, \forall {\tilde{n}} \in \{k^*+1,k^*+2,\dots\hspace*{-2pt},K\}\big\}.
	\label{Eqn:MimoFeasibleSet}
\end{align}
Next, we reformulate problem (\ref{Eqn:MimoSuboptimalFunctionRef}) as follows:
\begin{equation}
	\maximize_{\boldsymbol{W} \in \mathcal{W}^*} \quad  \hat{\Psi}(\boldsymbol{W}) \qquad \subjectto \quad \Tr{\boldsymbol{W}} \leq \nu.
	\label{Eqn:MimoSuboptimalConvexRef}
\end{equation}
\noindent Optimization problem (\ref{Eqn:MimoSuboptimalConvexRef}) is still non-convex due to the non-concavity of the objective function.
In the following, we propose to solve (\ref{Eqn:MimoSuboptimalConvexRef}) exploiting \gls*{sca} \cite{Sun2017}.
To this end, we construct an underestimate of the objective function $\hat{\Psi}(\boldsymbol{W})$, which is convex in $\mathcal{W}^*$, as follows:
\begin{equation}
\hat{\Psi}(\boldsymbol{W}) \geq \hat{\Psi}(\boldsymbol{W}^{(t)}) + \Tr{\triangledown\hat{\Psi}(\boldsymbol{W}^{(t)}) (\boldsymbol{W} - \boldsymbol{W}^{(t)} )} ,
\end{equation}
\noindent where $\boldsymbol{W}^{(t)}$ is the solution obtained in the iteration $t$ of the algorithm and $\triangledown \hat{\Psi}(\boldsymbol{W}^{(t)})$ denotes the gradient of $\hat{\Psi}(\boldsymbol{W})$ evaluated at $\boldsymbol{W}^{(t)}$.
Thus, in each iteration $t$ of the proposed algorithm, we solve the following optimization problem:
\begin{equation}
	\boldsymbol{W}^{(t+1)} = \argmax_{\boldsymbol{W}\in \mathcal{W}^*} \; \hat{\Psi}(\boldsymbol{W}^{(t)}) + \Tr{\triangledown\hat{\Psi}(\boldsymbol{W}^{(t)}) (\boldsymbol{W} - \boldsymbol{W}^{(t)} )}\quad
	\subjectto \; \Tr{\boldsymbol{W}} \leq \nu.
	\label{Eqn:MimoSuboptimalFunctionAlg}
\end{equation}
\noindent We note that (\ref{Eqn:MimoSuboptimalFunctionAlg}) is a feasible convex optimization problem that can be solved with standard numerical optimization tools, such as CVX \cite{Grant2015}.
Furthermore, it can be shown that similarly to problem (\ref{Eqn:MimoSuboptimalFeasibility}), the solution of (\ref{Eqn:MimoSuboptimalFunctionAlg}) yields a matrix, whose rank is equal to one.
Hence, we obtain the beamforming vector $\boldsymbol{w}^*$ as the dominant eigenvector of the solution $\boldsymbol{W}^*$ of (\ref{Eqn:MimoSuboptimalFunctionRef}) and compute the corresponding value of function $\Phi(\nu)~=~\Psi(\boldsymbol{w^*})$.
The proposed algorithm is summarized in Algorithm~\ref{OptimizationAlgorithmFunction}.
We note that the proposed algorithm converges to a stationary point of (\ref{Eqn:MimoSuboptimalFunctionRef}) \cite{Lanckriet2009}.
The computational complexity of a single iteration of the algorithm is given by\footnotemark\hspace*{0pt} $\mathcal{O}(K {N^\text{T}}^{\frac{7}{2}} + K^2 {N^\text{T}}^{\frac{5}{2}} +\sqrt{N^\text{T}} K^3)$, where $\mathcal{O}(\cdot)$ is the big-O notation.
\footnotetext{The computational complexity of a convex semidefinite problem that involves an $n \times n$ positive semidefinite matrix and $m$ constraints is given by $\mathcal{O}\big(\sqrt{n} (m n^3 + m^2 n^2 + m^3)\big)$ \cite{Polik2010}. Here, $n = N^\text{T}$ and $m = K+1$.}

\begin{algorithm}[!t]	
	\small				
	\linespread{1.45}\selectfont
	\SetAlgoNoLine%
	\SetKwFor{Foreach}{for each}{do}{end}		
	Initialize: Transmit power $\nu$, tolerance error $\epsilon_\text{SCA}$.	\\	
	1. Sort the channel gain vectors by their norms $\norm{\boldsymbol{g}_{p_1}^{m_1}} \geq \norm{\boldsymbol{g}_{p_2}^{m_2}} \geq \dots \geq \norm{\boldsymbol{g}_{p_{K}}^{m_{K}}}$, where $K = \sum_{m=1}^M N^\text{E}_m$, $m_k \in \{1,2,\dots\hspace*{-2pt},M\}$, $p_k \in \{1,2,\dots\hspace*{-2pt},N_{m_n}^\text{E}\}$, and $k = 1,2,\dots\hspace*{-2pt},K$.\\
	2. Set initial value $k^* = 0$.\\
	\For{$j = 1$ {\upshape to} $K+1$}{		
		3. Solve optimization problem (\ref{Eqn:MimoSuboptimalFeasibility}) for $k = j$ and store $k^* = j$ if the problem is feasible \\						
	}
	4. Determine set $\mathcal{W}^*$, set initial values $h^{(0)} = 0$ and $t = 0$, and randomly initialize $\boldsymbol{W}^{(0)}$. \\
	\Repeat{$|h^{(t)}-h^{(t-1)}|\leq \epsilon_\text{\upshape SCA}$ }{		
		a. For given $\boldsymbol{W}^{(t)}$, obtain $\boldsymbol{W}^{(t+1)}$ as the solution of (\ref{Eqn:MimoSuboptimalFunctionAlg}) \\								
		b. Evaluate $h^{(t+1)} = \hat{\Psi}(\boldsymbol{W}^{(t+1)})$\\
		c. Set $t = t+1$\\ 
	}
	5. Obtain $\boldsymbol{w}^*$ as the dominant eigenvector of $\boldsymbol{W}^{(t)}$ and evaluate $\Phi(\nu) = \Psi(\boldsymbol{w}^*)$\\
	\textbf{Output:} 
	$ \Phi(\nu)$, $\boldsymbol{w}^*$
	\caption{\strut Suboptimal algorithm for solving optimization problem (\ref{Eqn:MimoPropositionFunction}) }
	\label{OptimizationAlgorithmFunction}
\end{algorithm}	
				
	\section{Numerical Results}
	In this section, we evaluate the performance of the proposed transmit strategies via simulations.
First, we compare the performance of single-user MISO, SIMO, and MIMO WPT systems.
Then, we evaluate the performance of multi-user MIMO WPT systems for different numbers of antennas at the TX and EH nodes.
Finally, we study the influence of the weights $\xi_m, m\in \{1,2,\dots,M\},$ associated with the EH nodes and determine the harvested power region.

		\subsection{Simulation Parameters}
		
In the following, we provide the system parameters adopted in our simulations.
In our simulations, the path losses are calculated as $35.3+37.6\log_{10}(d_m)$, where $d_m$ is the distance between the TX and EH node $m$ \cite{Ghanem2020}.
Furthermore, in order to harvest a meaningful amount of power, we assume that the TX and each EH node have a line-of-sight link.
Thus, the channel gains $\boldsymbol{g}_p^m$ follow Rician distributions with Rician factor $1$ \cite{Goldsmith2005}.
For the EH model $\phi(\cdot)$, we adopt the model proposed in \cite{Morsi2019} and given by (\ref{Eqn:RaniaModel}) with parameter values {$a=1.29$, $B = 1.55\cdot10^3$, $I_s = \SI{5}{\micro\ampere}$, $R_L = \SI{10}{\kilo\ohm}$, and $A_s^2 = \SI{25}{\micro\watt} $}.
For Algorithms~\ref{AlgorithmGridSearch}, \ref{OptimalAlgorithm}, and \ref{OptimizationAlgorithmFunction} we adopt step size, grid size, and error tolerance values of $\Delta_{\rho} = 0.1$, $N_{\rho} = 10^3$, $\epsilon_{\text{PA}} = 10^{-3}$, and $\epsilon_{\text{SCA}} = 10^{-3}$, respectively.
We note that the grid size $N_{\rho}$ is chosen sufficiently large to ensure that the function $\Phi(\nu)$ saturates for $\nu = \rho_{N_\rho}$.
All simulation results were averaged over $1000$ channel realizations.
		\subsection{Single-user WPT Systems}
		\begin{figure}[!t]
	\centering
	\includegraphics[width=0.45\textwidth, draft=false]{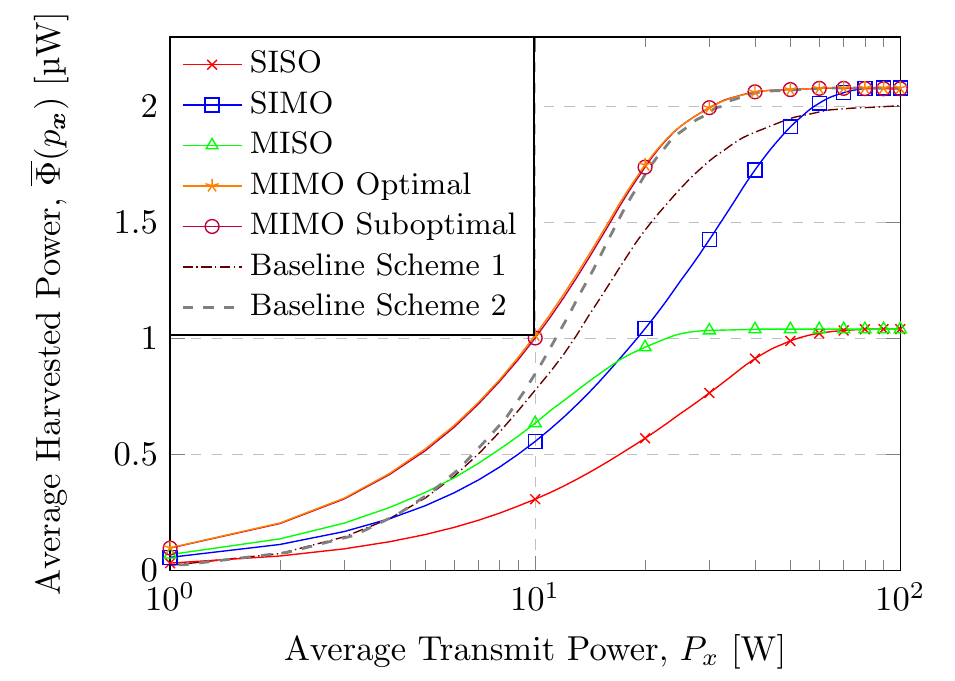}
	\caption{Comparison of single-user SISO, SIMO, MISO, and MIMO WPT systems.}
	\label{Fig:Results_SU}
\end{figure}
In this section, we investigate the performance of single-user WPT systems with different numbers of antennas at the TX and the EH node.
The distance between the TX and the EH is $d = \SI{10}{\meter}$.
The considered MISO and SIMO WPT systems employ $N^\text{T} = 2$ and $N^\text{E} = 2$ antennas at the TX and the EH node, respectively. 
For these systems, we evaluate the average harvested power $\overline{\Phi}(p^*_{\boldsymbol{x}})$ for different values of the power budget $P_x$ by applying Corollaries~\ref{Theorem:Corollary3} and \ref{Theorem:Corollary4}, respectively.
In the considered single-user MIMO WPT system, the TX and EH nodes employ $N^\text{T} = N_1^\text{E} = 2$ antennas, respectively.
For this system, optimal and suboptimal pdfs $p_{\boldsymbol{x}}(\boldsymbol{x})$ are determined by combining Algorithm~\ref{AlgorithmGridSearch} with Algorithms~\ref{OptimalAlgorithm} and \ref{OptimizationAlgorithmFunction}, respectively, and the values of $\overline{\Phi}(p^*_{\boldsymbol{x}})$ are computed.
For comparison, we also consider a SISO WPT system employing the optimal transmit strategy in \cite{Morsi2019}.
As Baseline Scheme 1, we consider a MIMO WPT system with energy beamforming at the TX, which is optimal for linear EHs \cite{Zhang2013}.
Furthermore, as Baseline Scheme 2, similar to \cite{Shen2020}, we consider a MIMO WPT system, where a scalar input symbol and a single beamforming vector are adopted at the TX. 
For this system, we obtain the optimal beamforming vector $\boldsymbol{w}^*$ as solution of (\ref{Eqn:MimoPropositionBeamformerProblem}) for $\nu = P_x$ with Algorithm~\ref{OptimalAlgorithm} and compute the corresponding harvested power as $\overline{\Phi} = \Phi(\boldsymbol{w}^*)$.

In Fig.~\ref{Fig:Results_SU}, we plot the average harvested powers $\overline{\Phi}(p^*_{\boldsymbol{x}})$ for different values of $P_x$.
First, we observe that for each considered WPT setup, the average harvested power $\overline{\Phi}(\cdot)$ is bounded above, since for the EH model in (\ref{Eqn:RaniaModel}), for sufficiently large values of $P_x$, all rectifiers of the EH node are driven into saturation.
Furthermore, the saturation level of the harvested power is proportional to the number of rectennas employed at the EH.
As expected, the MIMO WPT system achieves a superior performance compared to the SIMO and MISO WPT systems, which, in turn, outperform the SISO WPT system  significantly.
Interestingly, the MISO WPT system can harvest more power than the SIMO WPT system if the transmit power budget at the TX is low, whereas, for large values of $P_x$, the single rectenna of the MISO WPT system is driven into saturation and, thus, more power can be harvested by the SIMO WPT system.
Furthermore, we observe that for the MIMO WPT system, the proposed optimal transmit strategy, which employs two beamforming vectors, outperforms Baseline Schemes 1 and 2, which employ a single beamforming vector.
However, we note that Baseline Scheme 2, where the optimal beamforming vector obtained with Algorithm~\ref{OptimalAlgorithm} is utilized, outperforms Baseline Scheme 1, where energy beamforming is adopted \cite{Zhang2013}.
Finally, we note that although the computational complexity of the proposed suboptimal scheme for obtaining the MIMO WPT beamforming vectors is significantly lower than that of the optimal scheme, the performances attained by both schemes are practically identical. 
Therefore, in the next section, in order to keep the computational complexity low, we adopt the suboptimal scheme to evaluate the performance of multi-user MIMO WPT systems.

		\subsection{Multi-user WPT Systems}
		\begin{figure}[!t]
	\centering
	\subfigure[Comparison for different numbers of EH nodes, $M$]{
		\includegraphics[width=0.45\textwidth, draft=false]{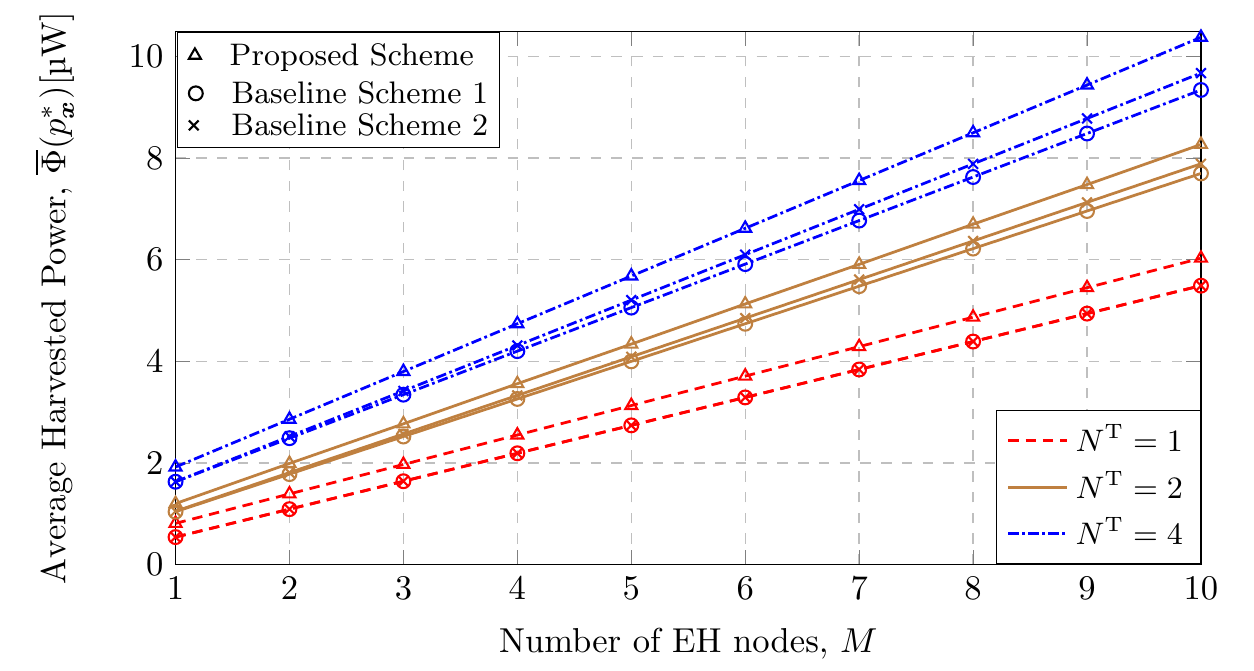} \label{Fig:Results_MU_Neh}}
	\quad
	\subfigure[Comparison for different numbers of TX antennas, $N^{\text{T}}$]{
		\includegraphics[width=0.45\textwidth, draft=false]{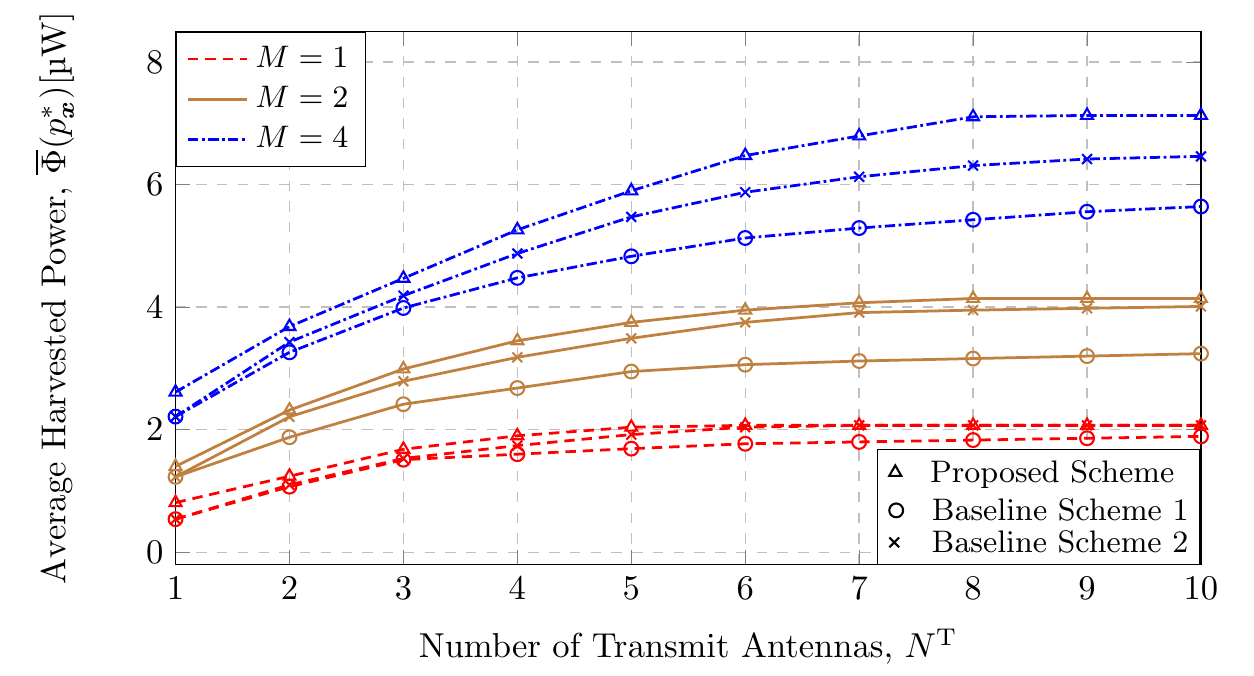} \label{Fig:Results_MU_Nt}}
	\caption{Average harvested power ${\overline{\Phi}(p^*_{\boldsymbol{x}})}$ for different numbers of transmit antennas $N^\text{T}$ and EH nodes $M$.}
	\label{Fig:Results_MU}
\end{figure}
In this section, we consider multi-user MIMO WPT systems, where the EH nodes are equipped with $N^{\text{E}}_m = 2, m \in \{1,2,\dots,M\},$ rectennas.
In Fig. \ref{Fig:Results_MU_Neh} and \ref{Fig:Results_MU_Nt}, we show the average harvested power for different numbers of transmit antennas $N^\text{T}$ and EH nodes $M$, respectively.
For each system setup, we compare the performance of the proposed transmit strategy with Baseline Scheme 1 and Baseline Scheme 2.
The results in Fig.~\ref{Fig:Results_MU} are obtained assuming a transmit power budget of $P_x = \SI{15}{\watt}$, equal weights for all EH nodes, i.e., $\xi_m = \frac{1}{M}$, and equal distances of $d_m = \SI{10}{\meter}, m = \{1,2,\dots,M\},$ between the TX and EH nodes.
We observe that higher values of $N^\text{T}$ and $M$ yield larger average harvested powers $\overline{\Phi}$.
Furthermore, we note that, similar to the single-user case, the proposed transmit strategy yields a better performance than the baseline schemes.
Moreover, for SIMO WPT systems, i.e., for $N^\text{T} = 1$, the transmit strategies for Baseline Scheme 1 and Baseline Scheme 2 are identical and depend only on the power budget $P_x$.
In Fig.~\ref{Fig:Results_MU_Neh}, we observe that the harvested power depends practically linearly on the number of EH nodes.
Although this result is relatively straightforward if the EH nodes are driven into saturation, see Fig.~\ref{Fig:Results_SU}, Fig.~\ref{Fig:Results_MU_Neh} suggests that the linear growth of $\overline{\Phi}(\cdot)$ as $M$ increases also holds when the rectifiers are not saturated. 
On the contrary, in Fig.~\ref{Fig:Results_MU_Nt}, we observe that for a large number of transmit antennas, the average harvested power saturates since for the EH model in (\ref{Eqn:RaniaModel}), the harvested power is bounded above. 
In fact, a larger number of transmit antennas $N^\text{T}$ enables a more efficient exploitation of the transmit power budget, which yields a higher received power at the EH nodes.
Furthermore, for larger $M$, the harvested power is distributed among a larger number of EH nodes and, thus, the number of transmit antennas needed to drive the rectennas into saturation, grows with $M$.
We observe that for small numbers of the rectennas at the EH nodes, i.e., for $M = 1$ and $M = 2$, the saturation level of the average harvested power is nearly identical for the proposed scheme and Baseline Scheme 2, whereas, for Baseline Scheme~1, this saturation level is significantly lower.
In fact, for massive MIMO systems, where the number of transmit antennas is much larger than the number of rectennas, the channel vectors $\boldsymbol{g}_p^m$ are practically orthogonal and, hence, the transmit strategy of Baseline Scheme 1 favors the rectenna with the best channel conditions while the powers transferred to the other rectennas are significantly lower \cite{Zhang2013}, which may not be optimum in case of rectenna saturation.

\begin{figure}[!t]
	\centering
	\subfigure[Low transmit power regime, $P_x = \SI{10}{\watt}$]{
		\includegraphics[width=0.45\textwidth, draft=false]{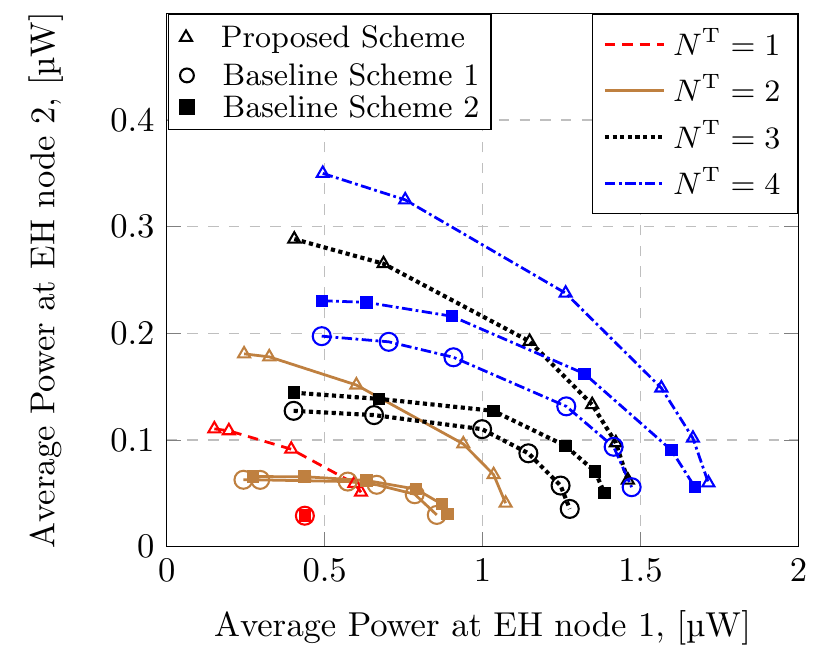} \label{Fig:Results_MU_Region_LP}}
	\quad
	\subfigure[High transmit power regime, $P_x = \SI{30}{\watt}$]{
		\includegraphics[width=0.45\textwidth, draft=false]{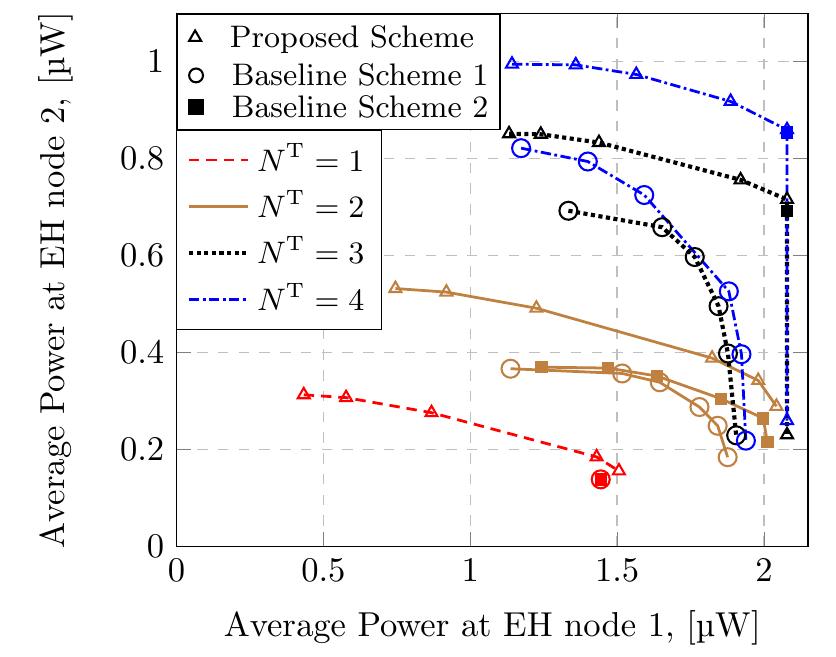} \label{Fig:Results_MU_Region_HP}}
\caption{Harvested power regions of the MIMO WPT system with $M=2$ EH nodes equipped with $N^\text{E}_m = 2, m=\{1,2\},$ rectennas for different numbers of transmit antennas $N^\text{T}$.}
	\label{Fig:Results_MU_Region}
\end{figure}
Finally, in Fig.~\ref{Fig:Results_MU_Region}, we study the influence of weights $\xi_m, m\in\{1,2\},$ for $M=2$ EH nodes on the powers harvested by the individual EH nodes.
For the results in Fig.~\ref{Fig:Results_MU_Region}, by varying the weights $\xi_m$, we determine the optimal transmit strategies that maximize the corresponding weighted sum of the average powers harvested at the EH nodes.
Then, for the optimal pdfs $p_{\boldsymbol{x}}^*(\boldsymbol{x})$, we evaluate and plot the average harvested powers at the individual EH nodes, $\mathbb{E}_{\boldsymbol{x}} \{\psi_m(\boldsymbol{x})\}, m=\{1,2\},$ respectively.
For comparison, we also show the average harvested powers obtained with Baseline Scheme 1 and Baseline Scheme~2, respectively.
In Fig.~\ref{Fig:Results_MU_Region_LP}, we consider a low transmit power regime characterized by a power budget of $P_x = \SI{10}{\watt}$, whereas for the results in Fig.~\ref{Fig:Results_MU_Region_HP}, we assume a high transmit power regime with $P_x = \SI{30}{\watt}$.
The distances between the EH nodes and the TX are equal to $d_1 = \SI{10}{\meter}$ and $d_2 = \SI{25}{\meter}$, respectively.
In Fig.~\ref{Fig:Results_MU_Region}, we observe that higher values of $N^\text{T}$ and $P_x$ yield larger average harvested powers at both EH nodes.
Furthermore, as expected, the proposed scheme yields a better performance compared to Baseline Scheme 1 and Baseline Scheme 2.
For both transmit power regimes, we observe that for SIMO WPT, i.e., $N^\text{T} = 1$, the performances obtained with both baseline schemes are identical and do not depend on the adopted weights $\xi_1$ and $\xi_2$.
In fact, in this case, the transmit strategies of the baseline schemes are identical and depend on the power budget $P_x$ only.
Moreover, in the high transmit power regime, we observe that the performance of Baseline Scheme 2 also does not depend on $\xi_m, m = \{1,2\},$ for large numbers of transmit antennas, i.e., $N^\text{T} = \{3,4\}$, since, for large $N^\text{T}$, EH node $1$ is driven into saturation anyways.
However, for the other system setups, the choice of the weights $\xi_m, m\in\{1,2\},$ enables a trade-off between the powers harvested at the EH nodes, which is characterized by a convex \emph{harvested power region}. 
Furthermore, by increasing weight $\xi_m$, more power is harvested at EH node $m$ at the expense of a reduction of the power harvested by the other node. 
Thus, by choosing the user weights, the TX can control the distribution of the harvested power among the users. 
In particular, for $\xi_1 = 1$ (and $\xi_2 = 0$), the TX maximizes the average harvested power at EH node $1$ and neglects EH node $2$, which may yield a substantial decrease of the power at EH node $2$.
In the high transmit power regime, EH node $1$ is driven into saturation for $N^\text{T} = \{3,4\}$.
In this case, by decreasing $\xi_1$ (and increasing $\xi_2$), it is possible to significantly increase the power harvested by EH node $2$ without a substantial reduction of the power harvested by EH node $1$.

	\section{Conclusion}
	In this paper, we considered multi-user MIMO WPT systems with multiple EH nodes employing non-linear rectennas. 
Based on a set of assumptions, which are satisfied for practical EH circuits, we specified a general EH model.
Then, we proposed an optimal transmit strategy that maximizes the weighted sum of the average harvested powers at the EH nodes under a constraint on the power budget of the TX.
For MISO WPT, we showed that transmission of scalar symbols with discrete random magnitudes, whose pdf has at most two mass points, via MRT beamforming is optimal.
Next, for SIMO WPT, we proved that the optimal transmit symbol magnitude also has a discrete pdf with no more than two mass points.
Then, for MIMO WPT, we showed that the optimal transmit strategy employs a scalar unit norm symbol and at most two beamforming vectors.
In order to obtain these vectors, we proposed and solved a non-convex optimization problem.
Since the computational complexity of the optimal solution was high, we developed an iterative low-complexity algorithm to obtain a suboptimal solution.
Our simulation results revealed that the proposed optimal and suboptimal schemes for MIMO WPT systems yield practically identical performance.
Furthermore, we observed that the proposed MIMO WPT design achieves substantial performance gains compared to two baseline schemes, one based on a linear EH model and the other one based on a single beamforming vector.
Moreover, for multi-user MIMO WPT systems, we showed that the harvested power saturates for large numbers of TX antennas.
Finally, we observed a trade-off between the powers harvested at individual EH nodes, which we characterized by a convex harvested power region.
			
	\appendices
	\begin{appendices}
		\renewcommand{\thesection}{\Alph{section}}
		\renewcommand{\thesubsection}{\thesection.\arabic{subsection}}
		\renewcommand{\thesectiondis}[2]{\Alph{section}:}
		\renewcommand{\thesubsectiondis}{\thesection.\arabic{subsection}:}	
		\section{Proof of Lemma 1}
		\label{Appendix:LemmaProof}
		In the following, we prove Lemma \ref{Theorem:Lemma1}.
First, we note that since $\nu_2^*$ is the maximizer of the slope function $\sigma(\cdot, \cdot; f)$ for $\nu_1 = {\nu}^*_1$, then we have
\begin{equation}
\frac{f(\nu_2^*) - f({\nu}^*_1)}{\nu_2^* - {\nu}^*_1} \geq \frac{f({\nu}) - f({\nu}^*_1)}{{\nu} - {\nu}^*_1}, \; \forall \nu \geq \overline{\nu} \geq {\nu}^*_1.
\label{Eqn:LemmaProof1}
\end{equation}
Then, since $\nu_1^*$ is the minimizer of $\sigma(\cdot, \cdot; f)$ for $\nu_2 = \nu_2^*$, we have
\begin{align}
\frac{f(\nu_2^*) - f(\nu_1^*)}{\nu_2^* - \nu_1^*} &\leq \frac{f(\nu_2^*) - f(\nu)}{\nu_2^* - \nu}, \forall \nu \leq \overline{\nu} \;  \Longleftrightarrow \\
&f(\nu_1^*)\nu_2^* - f(\nu)(\nu_2^*-\nu_1^*) \geq f(\nu_2^*)(\nu_1^* - \nu) + f(\nu_1^*)\nu, \forall \nu \leq \overline{\nu}.
\label{Eqn:LemmaProof2}
\end{align} 
Next, we subtract $f(\nu_1^*)\nu_1^*$ from both sides of (\ref{Eqn:LemmaProof2}). 
This allows us to rewrite both (\ref{Eqn:LemmaProof1}) and (\ref{Eqn:LemmaProof2}) as follows:
\begin{equation}
f(\nu_1^*) - f(\nu) \geq \frac{f(\nu_2^*) - f(\nu_1^*)}{\nu_2^* - \nu_1^*} (\nu_1^* - \nu),
\label{Eqn:LemmaProof3}
\end{equation}
\noindent respectively, which, thus, holds $\forall \nu \in \mathbb{R}$.
Let us define linear function $g(\nu) = f(\nu_1^*) + \frac{f(\nu_2^*) - f(\nu_1^*)}{\nu_2^* - \nu_1^*} (\nu - \nu_1^*)$.
Since, from (\ref{Eqn:LemmaProof3}), $g(\nu) \geq f(\nu)$, $\forall \nu \in \mathbb{R}$, then $\mathbb{E}_{\nu} \{f(\nu)\} \leq \mathbb{E}_{\nu} \{g(\nu)\} = g\big(\overline{\nu}\big) = f(\nu_1^*) + \alpha \big(f(\nu_2^*) - f(\nu_1^*) \big)$, where $\alpha = \frac{\overline{\nu} - \nu_1^*}{\nu_2^* - \nu_1^*}$.
Finally, with $\alpha = 1-\beta$, we have $\mathbb{E}_\nu \{f(\nu)\} \leq \beta f(\nu_1^*) + (1-\beta) f(\nu_2^*)$, where the inequality holds with equality if the pdf of $\nu$ is given by $p_\nu^*(\nu) = \beta \delta(\nu - \nu_1^*) + (1-\beta) \delta(\nu - \nu_2^*)$.
This concludes the proof.
		\section{Proof of Corollary 1}
		\label{Appendix:Corollary1Proof}
		To prove this corollary, let us define a linear function $g(\nu) = f'(\overline{\nu})(\nu-\overline{\nu}) + f(\overline{\nu})$.
We note that due to (\ref{Eqn:Corollary1_Condition}), $g(\nu) \geq f(\nu)$ $\forall \nu \in \mathbb{R}$.
Thus, $\mathbb{E}_\nu\{f(\nu)\} \leq \mathbb{E}_\nu\{g(\nu)\} = g\big( \overline{\nu} \big)$, where the equality holds due to the linearity of $g(\nu)$.
Hence, we conclude that the expectation of $f(\cdot)$ is upper-bounded by $\mathbb{E}_\nu\{f(\nu)\} \leq f(\overline{\nu})$, where the inequality holds with equality for pdf $p_\nu(\nu) = \delta(\nu-\overline{\nu})$.
This concludes the proof.
		\section{Proof of Corollary 2}
		\label{Appendix:Corollary2Proof}
		First, we note that the objective function in optimization problem (\ref{Eqn:GeneralOptimizationProblem}) is monotonically non-decreasing, whereas the feasible set of the problem is defined by inequality $\mathbb{E}_\nu\{\nu\} \leq A_\nu$. 
Hence, (\ref{Eqn:GeneralOptimizationProblem}) can be equivalently rewritten as follows \cite{Zhang2012}:
\begin{equation}
\maximize_{ {p}_{\nu} } \; \mathbb{E}_\nu \{f(\nu)\} \quad \subjectto \; \mathbb{E}_\nu \{\nu\} = A_\nu.
\label{Eqn:GeneralOptimizationProblemRef}
\end{equation}
Therefore, if condition (\ref{Eqn:Corollary1_Condition}) holds for function $f(\nu)$ and $\overline{\nu} = A_\nu$, the objective function in (\ref{Eqn:GeneralOptimizationProblemRef}) is upper-bounded by $\mathbb{E}_\nu \{f(\nu)\} \leq f(A_\nu)$, see Corollary~\ref{Theorem:Corollary1}, where the inequality holds with equality if $p_\nu(\nu) = \delta(\nu - A_\nu)$.

Let us now consider the case, where condition (\ref{Eqn:Corollary1_Condition}) does not hold for $f(\nu)$ and $\overline{\nu} = A_\nu$.
In this case, the expectation of $f(\nu)$ is upper-bounded by (\ref{Eqn:LemmaUpperBound}), see Lemma~\ref{Theorem:Lemma1}, where the inequality holds with equality if the pdf of random variable $\nu$ is given by $p_\nu^*(\nu) = \beta \delta(\nu - \nu_1^*) + (1-\beta) \delta(\nu - \nu_2^*)$ with $\nu_1^*$, $\nu_2^*$, and $\beta$ defined as in Lemma~\ref{Theorem:Lemma1}.
This concludes the proof.

		\section{Proof of Proposition 1}
		\label{Appendix:Prop1Proof}
		We solve optimization problem (\ref{Eqn:WPT_GeneralProblem}) for a single-user MISO WPT system, i.e., $M = N^\text{E}_M = 1$.
First, let us consider a distribution of the transmit symbols $\boldsymbol{x}$ which has a point of increase at $\tilde{\boldsymbol{x}}_0$.
For this distribution, a larger value of the input power at the EH and, thus, an equal or larger value of $\overline{\Phi}(\cdot)$ can be attained by removing the mass point $\tilde{\boldsymbol{x}}_0$ and increasing the probability of symbol $\boldsymbol{x}_0 = \norm{\tilde{\boldsymbol{x}}_0} \frac{\boldsymbol{g}^H}{\norm{\boldsymbol{g}}} \exp(j\theta_s)$ by the probability of symbol $\tilde{\boldsymbol{x}}_0$ of the former distribution, see Assumptions \ref{Assumption:IncreasingBounded}\cite{Zhang2013}.
We note that this transformation preserves the validity of the distribution, i.e., $\int p_{\boldsymbol{x}}(\boldsymbol{x}) d\boldsymbol{x} = 1$, and, since the transmit powers for the two symbols are identical, i.e., $\norm{\tilde{\boldsymbol{x}}_0}^2 =  \norm{{\boldsymbol{x}}_0}^2$, the new distribution does not affect the power budget of the TX.

Therefore, for the solution of (\ref{Eqn:WPT_GeneralProblem}), transmit vector  $\boldsymbol{x} = \boldsymbol{w} r_s \exp(j\theta_s)$ is optimal, where $\boldsymbol{w} = \frac{\boldsymbol{g}^H}{ \norm{\boldsymbol{g}}}$ is the MRT beamformer and $r_s = |s|$ and $\theta_s$ are the magnitude and the arbitrary phase of random scalar symbol $s$, respectively. 
We denote the pdf of the transmit power values $\nu = r_s^2$, $\nu \in [0, +\infty)$, by $p_\nu(\nu)$.
Then, the utility function in (\ref{Eqn:EhUtilityFunction}) can be rewritten as a function of pdf $p_\nu(\nu)$ as follows:
\begin{equation}
	\overline{\Phi} = \int_{\boldsymbol{x}} p_{\boldsymbol{x}}(\boldsymbol{x}) \phi(|\boldsymbol{g} \boldsymbol{x}|^2) d\boldsymbol{x} = \int_{r_s} p_{r_s}(r_s) \phi(\norm{\boldsymbol{g}}^2 r_s^2) d r_s = \int_{\nu} p_{\nu}(\nu){\Phi} (\nu) d \nu = \mathbb{E}_{\nu} \{{\Phi} (\nu)\}.
\end{equation}
Hence, problem (\ref{Eqn:WPT_GeneralProblem}) can be equivalently rewritten as follows:
\begin{equation}
\maximize_{{p}_{\nu}} \quad \mathbb{E}_{\nu} \{{\Phi} (\nu)\} \quad
\text{subject to}\quad  \mathbb{E}_{\nu} \{\nu\} \leq P_x.
\label{Eqn:Prop1ProofProblem}
\end{equation}

Since optimization problem (\ref{Eqn:Prop1ProofProblem}) is in the form of auxiliary problem (\ref{Eqn:GeneralOptimizationProblem}), we obtain the solution by applying Corollary~\ref{Theorem:Corollary2}.
First, the optimal pdf is given by $p_\nu^*(\nu) = \delta(\nu - P_x)$ if 
\begin{equation}
\Phi' (P_x) (P_x - \nu) \leq \Phi (P_x) - \Phi (\nu), \forall \nu \in \mathbb{R}_{+}.
\label{Eqn:Prop1ProofCondition}
\end{equation}
We note that condition (\ref{Eqn:Prop1ProofCondition}) is equivalent to (\ref{Eqn:Proposition1Condition}).
Furthermore, since $F_{r_s}(r_s) = F_\nu(r_s^2)$, where $F_{r_s}(r_s)$ and $F_\nu(\nu)$ are the cumulative density functions of $r_s$ and $\nu$, respectively, the optimal pdf of $r_s$ is given by $p_{r_s}^*(r_s) = \delta(r_s - \sqrt{P_x})$.

If (\ref{Eqn:Prop1ProofCondition}) does not hold, according to Corollary~\ref{Theorem:Corollary2}, the optimal solution of (\ref{Eqn:Prop1ProofProblem}) is given by $p^*_{\nu}(\nu) = (1-\beta) \delta(\nu - {\nu^*_1}) + \beta \delta(\nu - {\nu^*_2})$, where $\nu_1^*$ and $\nu_2^*$ are given by (\ref{Eqn:SolutionNu1}) and (\ref{Eqn:SolutionNu2}), respectively.
Finally, since $F_{r_s}(r_s) = F_\nu(r_s^2)$, the equivalent optimal pdf of $r_s$ is given by $p_{r_s}^* = (1-\beta) \delta(r_s - \sqrt{\nu^*_1}) + \beta \delta(r_s - \sqrt{\nu^*_2})$.
This concludes the proof.

		\section{Proof of Proposition 2}
		\label{Appendix:Prop2Proof}
		In order to prove Proposition~\ref{Theorem:Proposition2}, we note that as a sum of non-decreasing functions, function $\Phi(\cdot)$ is also monotonically non-decreasing.
Furthermore, the objective function can be equivalently rewritten as follows:
\begin{equation}
\overline{\Phi}(p_{r_x}) = \int_{r_x} p_{r_x}(r_x) \Phi(r_x^2) dr_x = \int_{\nu} p_{\nu}(\nu){\Phi} (\nu) d \nu = \mathbb{E}_{\nu} \{{\Phi} (\nu)\},
\end{equation}
where $p_\nu(\nu)$ is the pdf of the transmit power $\nu = r_x^2$.

Hence, for the considered SIMO WPT system, optimization problem (\ref{Eqn:WPT_GeneralProblem}) can be equivalently reformulated as follows:
\begin{equation}
	\maximize_{{p}_{\nu}} \quad \mathbb{E}_{\nu} \{\Phi(\nu)\} \quad
	\text{subject to}\quad  \mathbb{E}_{\nu} \{\nu\} \leq P_x.
	\label{Eqn:Prop2ProofProblem}
\end{equation}

Since optimization problem (\ref{Eqn:Prop2ProofProblem}) is in the form of auxiliary problem (\ref{Eqn:GeneralOptimizationProblem}), the application of Corollary~\ref{Theorem:Corollary2} yields Proposition~\ref{Theorem:Proposition2}.
This concludes the proof.
		\section{Proof of Corollary 4}
		\label{Appendix:Corollary4Proof}
		For the considered SIMO WPT systems with two rectennas, $\Phi(r_x^2) = \sum_{p=1}^{2} \phi(r_x^2 |g_p|^2)$.
Hence, the function $\Phi(\nu)$ is monotonic non-decreasing and convex in the intervals $[0, {\rho_\text{min}})$, $[{\rho_\text{min}}, {\rho_\text{max}})$, and $[{\rho_\text{max}}, \infty)$, respectively, and bounded $\Phi(\nu) \in [0, \Phi_r^\text{max}]$, where $\Phi_r^\text{max} = \Phi({\rho_\text{max}}) = 2\phi(A_s^2)$.
Therefore, if affordable by the power budget constraint, i.e., $P_x \geq \rho_\text{max}$, the optimal pdf is given by $p_r^*(r) = \delta(r - \sqrt{\rho_\text{max}})$.

From the condition in Assumption~\ref{Assumption:Quadratic}, we obtain the following inequality:
\begin{equation}
\begin{aligned}
	\frac{\Phi({\rho_\text{min}}) }{\rho_\text{min}} = \frac{\phi(A_s^2) + \phi(A_s^2 \rho_\text{min} / \rho_\text{max})}{\nu_1} \geq \frac{\phi(A_s^2) + \phi(A_s^2) (\rho_\text{min} / \rho_\text{max})^2}{\rho_\text{min}} \\ 
	= \frac{\phi(A_s^2)}{\rho_\text{max}} \frac{\rho_\text{max}^2 + \nu_1^2}{\rho_\text{min} \rho_\text{max}} \geq \frac{2\phi(A_s^2)}{\rho_\text{max}} = \frac{\Phi({\rho_\text{max}})}{\rho_\text{max}} > 0.
	\label{Eqn:Corollary4ProofCondition}
\end{aligned}	
\end{equation}	
Since function $\Phi(\cdot)$ is convex in the intervals $[0, {\rho_\text{min}})$, $[{\rho_\text{min}}, {\rho_\text{max}})$ and $\frac{\Phi({\rho_\text{min}})}{\rho_\text{min}} \geq \frac{\Phi({\rho_\text{max}})}{\rho_\text{max}}$, the solution of the optimization problem in Proposition~\ref{Theorem:Proposition2} is given by $\nu_1^* = 0$ and $\nu_2^* = \rho_\text{min}$ if $P_x < \rho_\text{min}$.
Finally, we note that from (\ref{Eqn:Corollary4ProofCondition}), we have $\frac{\Phi({\rho_\text{max}})}{\rho_\text{max}} \geq \frac{\Phi({\rho_\text{max}}) - \Phi({\rho_\text{min}})}{\rho_\text{max} - \rho_\text{min}}$ and, hence, if $P_x \in [\rho_\text{min}, \rho_\text{max})$, $\nu_1^* = \rho_\text{min}$ and $\nu_2^* = \rho_\text{max}$.
This concludes the proof.
		\section{Proof of Proposition 3}
		\label{Appendix:Prop3Proof}
			First, we note that for any arbitrary transmit symbol $\tilde{\boldsymbol{x}}$, there is a symbol $\hat{\boldsymbol{x}}$ given by
	\begin{equation}
		\hat{\boldsymbol{x}} = \argmax_{\boldsymbol{x}} \Psi(\boldsymbol{x}) \; \subjectto \norm{\boldsymbol{x}}^2 = \norm{\tilde{\boldsymbol{x}}}^2,
	\end{equation}  
	which has the same transmit power and yields a higher or equal value of $\Psi(\boldsymbol{x})$.
	Hence, for any arbitrary distribution of transmit symbols with a point of increase $\tilde{\boldsymbol{x}}$, a larger value of $\Psi(\boldsymbol{x})$ can be obtained by removing this point and increasing the probability of $\hat{\boldsymbol{x}}$ by the corresponding value.
	
	Let us introduce now a function $\Phi(\nu)$ that returns the largest possible value of $\Psi(\boldsymbol{x})$ if a symbol with power $\nu$ was transmitted. 
	This function is given by (\ref{Eqn:MimoPropositionFunction}). 
	We note that function $\Phi(\cdot)$ is monotonically non-decreasing, see Assumption~\ref{Assumption:IncreasingBounded}.
	Then, the solution of (\ref{Eqn:WPT_GeneralProblem}) can be obtained by determining first the solution $p^*_\nu(\nu)$ of the following optimization problem:
	\begin{equation}
		\maximize_{p_{\nu} (\nu)} \; \mathbb{E}_\nu\{\Phi(\nu)\}\; \quad \subjectto \; \mathbb{E}_\nu\{\nu\} \leq P_x.
		\label{Eqn:MimoPropositionProofProblem}
	\end{equation}
	Since (\ref{Eqn:MimoPropositionProofProblem}) is in the form of (\ref{Eqn:GeneralOptimizationProblem}), there exists an optimal discrete pdf $p^*_\nu(\nu)$ consisting of at most two mass points, $\nu^*_1$ and $\nu^*_2$, see Corollary~\ref{Theorem:Corollary2}.
	Hence, the optimal symbol vector $\boldsymbol{x}$ can be decomposed as $\boldsymbol{x} = \boldsymbol{w} s$ with unit-norm symbols $s$ and discrete random beamforming vector $\boldsymbol{w}$, whose pdf consists of at most two mass points evaluated as
	\begin{equation}
		\boldsymbol{w}^*_n = \argmax_{\boldsymbol{w}\, | \, \|\boldsymbol{w}\|_2^2 = \nu^*_n} \Psi(\boldsymbol{w}), \, n \in \{1,2\},
	\end{equation}
	with probabilities $p^*_{\boldsymbol{w}}(\boldsymbol{w}_n^*) = p^*_{\nu}(\nu_n^*)$, $n \in \{1,2\}$, respectively.
	This concludes the proof.		
	\end{appendices}

	\bibliographystyle{IEEEtran}
	\bibliography{Final}

\end{document}